\newcommand{\remove}[1]{}
\newtheorem{theorem}{Theorem}
\newtheorem{lemma}{Lemma}
\newtheorem{corollary}{Corollary}[theorem]
\theoremstyle{definition}
\newtheorem{definition}{Definition}
\newtheorem{remark}{Remark}
\newcommand{\trunc}{\mbox{\normalfont \tt trunc}}
\newcommand{\sm}{\mbox{\normalfont \tt sum}}
\title{ Bounds and Algorithms for  Alphabetic 
Codes and Binary Search Trees}
\author{R. Bruno, \and  R. De Prisco, \and  A. De Santis,
\and   and 
U. Vaccaro \IEEEmembership{Senior Member,\ IEEE }
\thanks{The authors are  with the Dipartimento di Informatica, 
Universit\`a di Salerno,
		Fisciano (SA), Italy (email: {\tt \{rbruno, robdep, ads, uvaccaro\}@unisa.it}).
  This work was partially supported by project SERICS (PE00000014) under the NRRP MUR program funded by the EU-NGEU.}}
\begin{document}
\maketitle

\thispagestyle{empty}
\pagestyle{plain}
\setcounter{page}{1}
\begin{abstract}
Alphabetic codes and binary search trees are 
combinatorial structures  that abstract
search procedures in  ordered sets endowed with 
probability distributions.
In this paper, we design new linear-time algorithms
to construct alphabetic codes, and we show that 
the obtained codes are not too far
from being optimal.
Moreover, we exploit our results on
alphabetic codes to provide new bounds on the average cost of optimal
binary search trees. Our results improve on the best-known
bounds on  the average cost of optimal
binary search trees present in the literature.
\end{abstract}
\begin{IEEEkeywords}
\textbf{Key words}: Alphabetic codes, searching, variable-length  codes, 
binary search trees,
data structures.
\end{IEEEkeywords}


\section{Introduction}

The Theory of Variable-Length Codes and Search Theory are 
strongly intertwined. In fact, \textit{any} search procedure
that identifies objects in a given space, by sequentially
executing  appropriate tests naturally defines a variable-length 
encoding of elements in the space. To wit,
one can encode  each  possible test outcome  with 
a different symbol of a
finite alphabet, and the concatenation of
the (encoded) test outcomes, leading to the identification
of an object $x$, gives a \textit{bona fide} encoding of
$x$, for any $x$ in the search space.

More specifically, binary alphabetic codes and binary search trees are basic combinatorial
structures that naturally arise in Search Theory \cite{AW,aigner,ka}. 
In fact, alphabetic codes are \textit{equivalent} to search procedures (in ordered sets) that operate
through comparison queries with binary results,
while binary search trees correspond
to search procedures that use comparison queries whose outcome is ternary.

Alphabetical codes have been extensively studied in the Information Theory
community (see, e.g., \cite{gilbert1959variable, HiYa, 
HORIBE1977148, hu1971optimal, IY, nakatsu1991bounds, sheinwald1992binary, yeung1991alphabetic} and references therein quoted). Equally, binary search trees have been
studied in the Computer Science community, because of the important application scenarios where they appear (see, e.g., \cite{knuth, mehlhorn1975nearly, KM, Nievergelt} and references therein quoted).
The two survey papers \cite{julia, Nagaraj} contain many more references.

It is well known that optimal binary alphabetic codes (i.e., with minimum  average length)
can be constructed in time $O(m\log m)$ \cite{hu1971optimal}, where $m$ is the number of codewords, 
and that optimal binary  search trees  (again, with minimum possible average path length)
can be constructed in time $O(n^2)$ \cite{knuth}, where $n$ is the number of nodes 
in the tree. Most of the research activity 
that followed \cite{hu1971optimal} and  \cite{knuth} has concentrated on the following issues: constructing close-to-optimal binary alphabetic codes (equivalently, close-to-optimal binary search trees)
via \textit{linear-time} algorithms, and providing explicit upper bounds on the cost (average 
path length) of the constructed objects. In this paper, 
we provide novel upper bounds on the cost of 
optimal binary alphabetic codes and optimal binary search trees.
Our bounds improve on the best bounds known in the literature.
Our new results are obtained
by accurately analyzing novel linear-time construction algorithms.


\section{Preliminaries}

Let $S=\{s_1, \ldots, s_m\}$  be a set of symbols,
ordered according to a given total order relation $\prec$, that is, for which it holds that
$s_1\prec \dots \prec s_m$. 
A binary alphabetical code is an order-preserving 
mapping $w:\{s_1, \ldots , s_m\}\mapsto \{0,1\}^+$,
where the order relation on the set of all
binary strings $\{0,1\}^+$ 
is the standard alphabetical order. We denote by $C$ the set of
codewords $C=\{w(s): s\in S\}$, and by 
$\phi=(\phi_1, \ldots , \phi_m)$  the probability distribution
 on the symbols
in $S$, that is, $\phi_i$ is the probability of the $i^{th}$ symbol $s_i$. 
For each $s_i\in S$ we denote by 
  $w(s_i)$  the codeword associated to $s_i$, and by
$\ell(w(s_i))$, or simply by $\ell_i$, the length of $w(s_i)$.

Given an alphabetic code $C$, an order-preserving mapping
$w:s\in \{s_1, \ldots , s_m\}\mapsto w(s)\in \{0,1\}^+$, and 
a probability distribution $\phi=(\phi_1, \ldots , \phi_m)$ on $S$, 
we denote the  average code length of the code  $C$ by 
\begin{equation*}
    \mathbb{E}[C]= \sum_{i=1}^m \phi_i\,\ell_i.
\end{equation*}

In the following sections, we will occasionally see an alphabetic code as a binary tree $T$ where each edge is labeled either by bit 0 or 1, 
and each leaf
of the tree corresponds to a different symbol  $s\in S$. 
Sometimes, it will be convenient to identify an arbitrary leaf $s_i$
of the tree $T$ with
the probability $\phi_i$ of the symbol $s_i\in S$ that is associated to the 
leaf.
The codeword $w(s)$ for the symbol $s\in S$ is equal to
the concatenation of all the 0's and 1's  
in the path from the root of $T$ to the leaf associated with $s$. The order-preserving property
of alphabetic codes implies that the leaves of $T$, read from the leftmost
leaf to
the rightmost one, appear in the order $s_1\prec \dots \prec s_m$.
As said before, alphabetic codes, in their binary tree representation,
naturally correspond to a search procedure that makes use of comparison
queries with binary outcomes.

Binary search trees represent, in a sense, a generalization of binary
trees associated to alphabetic codes, since the former can take into account
both successful and unsuccessful searches. More precisely, 
any comparison-based search algorithm on an ordered list of elements $x_1\prec\dots \prec x_n$ can be represented as a binary search tree, with an internal node 
associated to each element $x_i$ and a leaf associated to each of the $n+1$ gaps $(-\infty, x_1),(x_1,x_2),\dots,(x_{n-1},x_n),(x_n,+\infty)$
among the $x_i$'s. Let $x$ be an element that we want to seek for in the list. 
Each internal node represents a comparison operation between $x$ and 
the content $x_i$ of the node. The outcome of
the comparison can be either $x \prec x_i$, $x=x_i$ or $x \succ x_i$ 
(see Section 6.2 of \cite{knuth}). Thus, the internal nodes correspond to successful searches, and the leaves to the unsuccessful ones.

The cost of a binary search tree $T$ is defined as the average number of questions needed to identify an arbitrary input element $x$. More precisely, 
given a probability distribution $\sigma = (\sigma_1,\dots, \sigma_{2\,n+1}) = (p_0,q_1,p_2,\dots,q_n,p_n)$, where  the probabilities of successful
searches are $q_1,\dots,q_n$, and $p_0,\dots,p_n$ are
the probabilities
of  unsuccessful searches,   the cost of the binary search tree $T$, for the probability distribution $\sigma$, is equal to
\begin{equation}\label{costbst}
\mathbb{E}[T] = \sum_{i=1}^n q_i\,\ell(q_i) + \sum_{i=0}^n p_i\,\ell(p_i),    
\end{equation}
where $\ell(q_i)$ denotes the level of $q_i$ in the tree, i.e., the number of nodes from the root of $T$ to $q_i$, whereas $\ell(p_i)$ is the level of the parent of $p_i$ in the tree \cite{knuth}. {We stipulate that the level of the root of $T$ is equal to 1}.

Note that when the probabilities of successful searches, $q_1,\dots,q_n$, are 
all 0's, the problem of finding an optimal binary search tree for the probability distribution $\sigma$ is equivalent to the problem of finding an optimal alphabetic code for the probability distribution $\phi = (p_0,\dots,p_n)$.


\section{Our contributions in perspective }\label{our}

Alphabetic codes have been extensively
studied. In this section, we limit ourselves to discussing
the work that is most closely related to our findings.
We will present our results 
in Section \ref{alpha} and Section \ref{bst}.
 
In their classic paper \cite{gilbert1959variable}, 
Gilbert and Moore designed a
$O(m^3)$ time  dynamic programming algorithm for the construction of optimal alphabetic codes (that is, of
minimum average length),  where $m$ is the number of symbols. 
Hu and Tucker \cite{hu1971optimal} improved Gilbert and Moore's
result by providing an algorithm of time complexity $O(m\log m)$. Subsequent 
studies have mostly concentrated on 
the problem of giving linear time algorithms for the construction
of almost optimal alphabetic codes and
in providing explicit upper bounds on the average length of 
the constructed codes. It is clear that the derived bounds are
upper bounds on the length of optimal alphabetic codes, as well.

Gilbert and Moore \cite{gilbert1959variable} gave  a linear time algorithm 
to construct an alphabetic code for a set 
of symbols $S$, with associated probability distribution $\phi$, whose average length is less than 
    $H(\phi)+2$. Here, $H(\phi)=-\sum_{i=1}^m\phi_i\log \phi_i$ is the 
    Shannon entropy of the distribution $\phi$. Since the 
    average length of any alphabetic code is lower bounded by $H(\phi)$,
    one gets that 
 Gilbert and Moore's
    codes are at most two bits away from the optimum. We note that Gilbert and Moore's upper bound cannot be improved, unless one has some additional information on the probability distribution $\phi$.
    In fact, one can see that 
    the average length of the best alphabetical code 
    for a set of three symbols $s_1\prec s_2\prec s_3$, with  probability distribution 
     $\phi=(\epsilon, 1-2\epsilon, \epsilon)$, is equal to $2-\epsilon$. On the other hand, 
    $H(\phi)$ can be arbitrarily small.
    Remarkably, the average length of the 
    Huffman code for the same $\phi=(\epsilon, 1-2\epsilon, \epsilon)$
    is equal to $1+2\epsilon$, thus there can be a difference of 
    (approximatively)
    1 bit between  the average length of the best alphabetical code and that of
    the Huffman code, for the same $\phi$.
    The general characterization of the ``shape" of worst-case
    probability distributions for alphabetical codes is given
    in the paper \cite{ks}.
    
    Horibe \cite{HORIBE1977148} provided a better
    upper bound than the  $H(\phi)+2$  bound
    of Gilbert and Moore, by giving an   algorithm to construct 
    alphabetic codes of average length less than 
    \begin{equation}\label{ho}
    H(\phi)+2-(m+2)\phi_{\min}, 
    \end{equation}
    where $\phi_{\min}$ is the 
    smallest probability of $\phi$. 
    
    Nakatsu \cite{nakatsu1991bounds} claimed a different upper bound
    on the minimum average 
    length of alphabetical codes.
    Successively, Sheinwald  \cite{sheinwald1992binary} pointed out 
    a gap in the analysis carried out in \cite{nakatsu1991bounds}.

   Yeung \cite{yeung1991alphabetic} also improved the
   $H(\phi)+2$  bound
    of Gilbert and Moore, by
   proving the following  upper bound on the average length $L_{\min}$ of
    optimal binary alphabetic codes:
\begin{align}
    \label{Yeung_bound1}
   L_{\min} &\leq H(\phi)+2- \phi_1 \left (2-\log \phi_1 -\lceil -\log \phi_1 \rceil\right)
        \nonumber\\
    &\quad- \phi_m \left (2-\log \phi_m -\lceil -\log \phi_m \rceil\right)\\  
          \label{Yeung_bound2}
            &\leq H(\phi)+2-\phi_1 -\phi_m,
\end{align}
where $\phi_1$ and $\phi_m$ are the probabilities of the first and the last symbol of the ordered set of symbols $S$, respectively. 
In general, the bounds (\ref{ho}) and (\ref{Yeung_bound1}) are not
comparable, in the sense that there are 
probability distributions $\phi$ for which (\ref{ho}) is better than 
 (\ref{Yeung_bound1}), and probability distributions $\phi$ for which 
  the opposite holds.

    A new upper bound on the average length of optimal alphabetic codes was recently given by Dagan \textit{ et al.} \cite{dagan}.
    Their bound claims that
    \begin{align} \label{Dagan_bound}
   L_{\min} &\leq H(\phi) +2- \left (\phi_1+\phi_m +\sum_{i=1}^{m-1}\min(\phi_i, \phi_{i+1})\right)\\
            &=H(\phi)+1-\frac{\phi_1+\phi_m}{2}+\frac{1}{2}\sum_{i=1}^{m-1}|\phi_i-\phi_{i+1}|.
\end{align}

While the upper bound (\ref{Dagan_bound}) is correct, in Remark \ref{remark}
(which will be given in Section~\ref{alpha})
we point out
 an unresolved issue in the  analysis given in \cite{dagan} that leads to (\ref{Dagan_bound}).

Our results for alphabetic codes consist of a series of improvements to the upper bound (\ref{Dagan_bound}) of Dagan \textit{ et al.} on the average length 
of optimal alphabetic codes $L_{\min}$. 
For instance, 
we will prove that 
\begin{align}
 L_{\min} &\leq 
 H(\phi)+2 -\phi_1\,(2-\log_2 \phi_1 -\lceil-\log_2 \phi_1\rceil)
 \nonumber\\
 &\quad-\phi_m\,(2-\log_2 \phi_m -\lceil-\log_2 \phi_m\rceil)\nonumber\\ 
 &\quad-\sum_{i=1}^{m-1} \min(\phi_i,\phi_{i+1}).\label{bdv}
\end{align}
More importantly,
we    design a $O(m)$ time algorithm to  construct  alphabetic 
codes of average length at most the right-hand side of (\ref{bdv}). 
The $O(m)$ time algorithm that we design will be instrumental in deriving
the results on binary search trees of Section \ref{bst}.
We derive further improvements to (\ref{Dagan_bound})
in Section \ref{alpha}.

\medskip
In Section \ref{bst} we turn our attention to binary search trees.
Binary search trees have been extensively
investigated (e.g., \cite{de1993binary, mehlhorn1975nearly, KM, Nievergelt}).
Knuth \cite{knuthp} gave an $O(n^2)$ time and $O(n^2)$ space complexity
algorithm to construct an optimum binary
search tree, i.e., a binary search tree with minimum cost, as defined in (\ref{costbst}).
The quadratic complexity of the algorithm designed in
\cite{knuthp} may be prohibitive in many
applications, as Mehlhorn remarked in \cite{mehlhorn1975nearly}. 
Therefore, 
Mehlhorn \cite{mehlhorn1975nearly} proposed a linear time algorithm to construct a binary search tree for an arbitrary probability distribution 
$$\sigma=(\sigma_1, \ldots , \sigma_{2n+1})=(p_0, q_1, p_1, q_2, \ldots ,
q_n,p_n),$$ where the $p_i$'s are probabilities of unsuccessful searches
and the $q_i$'s are probabilities of successful searches,
whose cost is not greater than
\begin{equation}\label{mel}
    H(\sigma) + 1 + \sum_{i=0}^n p_i.
\end{equation}

De Prisco and De Santis \cite{de1993binary} proposed a different linear time algorithm for 
constructing binary search trees  and they claimed  an upper bound
on their cost given by
\begin{equation}
    \label{De_prisco_bound}
    H(\sigma)+1-p_0 -p_n + p_{\max},
\end{equation}
where $p_{\max}$ is the maximum value among $p_0,\dots,p_n$.
Unfortunately, the bound (\ref{De_prisco_bound}) does not hold in general, as pointed out in  \cite{dagan,ku}.

In Section \ref{bst} we correct the analysis 
of the algorithm derived in  \cite{de1993binary}, and we explicitly
estimate the cost of the produced binary search trees.
The simplest of our bounds states
that the cost $\mathbb{E}[T]$ of the tree $T$ constructed by the
algorithm of \cite{de1993binary} is upper bounded by

\begin{align}
 \mathbb{E}[T]
   &\leq H(\sigma) + 1+\sum_{i=0}^n p_i -\Biggl( p_0 +p_n+
    \sum_{i=0}^{n-1}\min(p_i,q_{i+1}) \nonumber\\
    &\quad+\sum_{i=1}^n\min(q_i,p_i)
    + \sum_{i=0}^{n-1} \min(p_i,p_{i+1})\Biggr)\nonumber\\
    &=H(\sigma)+1+\sum_{i=0}^n p_i-\Biggl(1-\sum_{i=0}^{n-1} \frac{|p_i-q_{i+1}|}{2} \nonumber\\
    &\quad- \sum_{i=1}^{n} \frac{|q_i-p_i|}{2} + \sum_{i=0}^n p_i -\sum_{i=0}^{n-1} 
    \frac{|p_i-p_{i+1}|}{2}\Biggr).\label{nostro1}
\end{align}

\medskip
It is apparent that (\ref{nostro1}) significantly improves
the Melhorn bound
(\ref{mel}). We derive additional and better upper bounds than (\ref{nostro1}) 
in Section \ref{bst}.  Clearly, our results also
represent upper limits on the cost of optimal
binary search trees. The new bounds on the
cost of binary search trees are crucially dependent on the new results on alphabetic codes
that we derive in Section \ref{alpha}. 


\section{On Alphabetic codes}\label{alpha}

We first develop some useful machinery.
Given a probability distribution $\phi=(\phi_1,\dots,\phi_m)$, let us denote with
$$\nu= (\nu_1,\dots,\nu_{2\,m-1})=(\phi_1,0,\phi_2,\dots,0,\phi_m)$$ its \textit{partially extended} distribution, and with 
$$\nu\,'=(\nu_1,\dots,\nu_{2\,m+1})=(0,\phi_1,0,\phi_2,\dots,0,\phi_m,0)$$ its \textit{fully extended} distribution.

We also need some auxiliary tools and results from Nakatsu \cite{nakatsu1991bounds}.

\begin{definition}\label{denakatsu}
    \cite{nakatsu1991bounds}
    Let $L = \langle\ell_1,\dots,\ell_m \rangle$ be a list of positive integers.  
    For a binary fraction $x$ and integer $i\geq 1$,
let the function $\trunc$ be defined as
\begin{equation}\label{trunc}
\trunc(i,x)=\frac{\lfloor 2^i\,x\rfloor}{2^i}
\end{equation}
 that is, $\trunc(i,x)$ is the fraction obtained by considering only the first 
 $i$ bits in the binary representation of $x$.
 
 Let $\alpha_i = \min(\ell_{i-1},\ell_i)$, for $i=2,\dots,m$. 
   The recursive  function $\sm$ is  defined as
    \begin{equation}\label{sm}
        \sm(L, i) = \begin{cases}
            \trunc(\alpha_i, \sm(L, i-1))+2^{-\alpha_i} & \text{if } i \geq 2, \\
            0 & \text{if } i = 1.
        \end{cases}
    \end{equation}
\end{definition}

\medskip\noindent

{
To better comprehend the functions \trunc \ and \sm \ defined above, let us 
work out a small numerical example.
Let $L=\{4,2,3,3\}$ be a list of integers. We  have $\alpha_2 = \min(4,2) = 2$, $\alpha_3=\min(2,3) = 2$ and $\alpha_4=\min(3,3) = 3$.
Now, we can compute the value of $\sm$ for $L$. By definition $\sm(L,1)=0$. It holds that
\begin{align*}
    \sm(L,2)&=\trunc(\alpha_2,\sm(L,1)) + 2^{-\alpha_2}\\
    &= \trunc(2, 0) + 2^{-2} = 2^{-2},
\end{align*}
since $\trunc(2, 0)$ takes the value 
whose binary expansion is the same of the first 2 bits of the binary representation of $0$ (that are 
all $0$'s). Similarly, we have
\begin{align*}
     \sm(L,3)&= \trunc(\alpha_3,\sm(L,2))+ 2^{-\alpha_3}\\
     &= \trunc(2, 2^{-2}) + 2^{-2} = 2^{-2} + 2^{-2} = 2^{-1},
\end{align*}
since $\trunc(2, 2^{-2})$ takes the value whose binary expansion is the same of the first 2 bits of the binary representation of $2^{-2}=\frac{1}{4}$,
which correspond to the same value $2^{-2}$ since the binary representation of $\frac{1}{4}$ is $0\,\frac{1}{2} + 1\,\frac{1}{4}$.
Finally, we have
\begin{align*}
     \sm(L,4)&= \trunc(\alpha_4,\sm(L,3))+ 2^{-\alpha_4}\\
     &= \trunc(3, 2^{-1}) + 2^{-3} = 2^{-1} + 2^{-3}.
\end{align*}

Intuitively, when $\sm(L,i)<1$, one can see that $\sm(L,i)$ 
enjoys the property of being the smallest value strictly greater than $\sm(L,i-1)$ that differs from $\sm(L,i-1)$ on 
at least the $\alpha_i^{th}$ 
bit of their binary expansion, where $\alpha_i = \min(\ell_{i-1},\ell_{i})$.  Thus, by denoting with $w_i$ the first $\ell_i$ bits of 
the binary expansion of $\sm(L,i)$ and with $w_{i-1}$ the first $\ell_{i-1}$ bits of 
the binary expansion of 
$\sm(L,i-1)$, the property ensures  that $w_{i-1}$ \textit{comes} before $w_i$ in the
standard alphabetic order among binary sequences.
Moreover, neither  $w_i$ is a  prefix of $w_{i-1}$ 
nor $w_{i-1}$ is a  prefix of $w_{i}$.

\medskip\noindent 
Nakatsu \cite{nakatsu1991bounds} proved the following important result
(an equivalent result was independently given by Yeung \cite{yeung1991alphabetic}).

\begin{theorem}
    \label{condition_1}
    \cite{nakatsu1991bounds}
    Let $L = \langle \ell_1,\dots,\ell_m \rangle$ be 
    a list of integers, associated with the ordered 
    symbols $s_1 \prec \dots \prec s_m$. 
    There exists an alphabetical code for which  the 
    codeword assigned to symbol $s_i$ has length $\ell_i$,
    for each $i=1, \ldots, m$, 
 if and only if $\sm(L, m) < 1$.
\end{theorem}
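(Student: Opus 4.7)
The plan is to identify every binary codeword $w$ with its binary fraction $\langle w\rangle=\sum_{j=1}^{|w|}w_j\,2^{-j}\in[0,1)$ and, through this translation, to show that the recursion defining $\sm$ mirrors the arithmetic constraint imposed on the codewords by being alphabetically ordered and prefix-free. The bridging observation, established by a short case split on whether $\ell_{i-1}\leq \ell_i$ or $\ell_{i-1}>\ell_i$, is that two codewords $w_{i-1},w_i$ of lengths $\ell_{i-1},\ell_i$ are alphabetically ordered (with $w_{i-1}\prec w_i$) and prefix-free if and only if, reading their first $\alpha_i=\min(\ell_{i-1},\ell_i)$ bits as binary integers, the portion of $w_i$ exceeds the portion of $w_{i-1}$ by at least $1$.

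For the direct implication $(\Rightarrow)$, I would suppose that an alphabetic code with lengths $\ell_1,\dots,\ell_m$ exists, set $x_i=\langle w_i\rangle$, and prove by induction on $i$ that $\sm(L,i)\leq x_i$. The base case $\sm(L,1)=0\leq x_1$ is immediate. For the inductive step, the bridging observation yields $\lfloor 2^{\alpha_i}x_i\rfloor\geq \lfloor 2^{\alpha_i}x_{i-1}\rfloor+1$; dividing by $2^{\alpha_i}$ and invoking the monotonicity of $\trunc(\alpha_i,\cdot)$ together with the inductive hypothesis produces $x_i\geq \trunc(\alpha_i,\sm(L,i-1))+2^{-\alpha_i}=\sm(L,i)$. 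Since the fraction $x_m$ of an $\ell_m$-bit codeword satisfies $x_m\leq 1-2^{-\ell_m}<1$, the chain $\sm(L,m)\leq x_m<1$ completes this direction.

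For the converse $(\Leftarrow)$, assuming $\sm(L,m)<1$, I would first verify that $\sm(L,i)<1$ for every $i\leq m$: indeed, whenever $\sm(L,j)\geq 1$ the recursion forces $\sm(L,j+1)\geq 1+2^{-\alpha_{j+1}}>1$, so any premature overflow would be visible at the end. This legitimises defining $w_i$ to be the $\ell_i$-bit prefix of the binary expansion of $\sm(L,i)$. The identity $\sm(L,i)=\trunc(\alpha_i,\sm(L,i-1))+2^{-\alpha_i}$ then says precisely that the first $\alpha_i$ bits of $w_i$ exceed those of $w_{i-1}$ by exactly $1$ (no carry occurs beyond the $\alpha_i$-th bit, since otherwise $\sm(L,i)$ would reach $1$); by the bridging equivalence, this simultaneously forbids a prefix relationship between $w_{i-1}$ and $w_i$ and forces $w_{i-1}\prec w_i$ alphabetically, and transitivity then delivers the full ordering $w_1\prec \cdots \prec w_m$.

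The main obstacle I anticipate is the bridging equivalence itself: translating the combinatorial conditions ``alphabetically ordered'' and ``prefix-free'' into the clean arithmetic statement about the first $\alpha_i$ bits, without dropping or double-counting a factor of $2^{-\alpha_i}$, requires the careful case split on the relative sizes of $\ell_{i-1}$ and $\ell_i$ alluded to above. Once this lemma is in place, both directions of the biconditional follow via routine inductions driven by the definition of $\sm$.
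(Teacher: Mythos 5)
Your proof is correct and takes essentially the same route as the source: the paper does not reprove this theorem but cites Nakatsu, and the sufficiency construction it describes---taking the first $\ell_i$ bits of the binary expansion of $\sm(L,i)$---is exactly your converse direction, while your necessity direction is the standard induction showing $\sm(L,i)\le \langle w_i\rangle \le 1-2^{-\ell_m}<1$. Your bridging equivalence (the first $\alpha_i$ bits of $w_i$, read as an integer, exceed those of $w_{i-1}$ by at least one) is precisely the statement that the dyadic intervals $[\langle w_i\rangle,\langle w_i\rangle+2^{-\ell_i})$ of consecutive codewords are disjoint and increasing, which also gives prefix-freeness of non-adjacent pairs, so your appeal to transitivity is sound.
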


{
Nakatsu proved the sufficient part of Theorem \ref{condition_1} by showing that the 
first $\ell_i$ digits of the binary representation of $\sm(L, i)$,
for $i=1, \ldots, m$, (under the hypothesis that $\sm(L, m) < 1$)
give the alphabetic code 
for the ordered
set of
symbols $s_1 \prec \dots \prec s_m$. 
}
{In the following Lemma \ref{lemma:linear_time_alg} we show that an alphabetic code, 
whose existence is guaranteed by Theorem \ref{condition_1}, can be constructed 
in time $O(m)$.
We first need the following technical result, whose proof is given in Appendix A.}
{
    \begin{restatable}{lemma}{teclemma}\label{lemma:l_bit}
        Let $L=\langle\ell_1,\dots,\ell_m\rangle$ be a list of integers 
        such that $\sm(L,m)<1$, let 
        $i$, $j$ be integers
        such that $1\leq i<j\leq m$, and let $\{\sm(L,i),\sm(L,i+1),\dots,\sm(L,j)\}$ be a subset of consecutive elements of $\{\sm(L,1),\dots,\sm(L,m)\}$. 
        For each $k=i,\dots,j-1$,
       let   $t_k$ be the \emph{smallest} integer $s$ 
       for which the binary expansion of  
       $\sm(L,k)$ differs from the binary 
       expansion of $\sm(L,k+1)$  on the $s^{th}$ bit, and 
       define $t_{ij}=\min_{i\leq k<j} t_k$. Then, it holds that: 
       \begin{enumerate}
       \item \begin{equation}\label{tij}
            t_{ij} = \lceil-\log_2 (\sm(L,i) \oplus \sm(L,j))\rceil,
        \end{equation}
        where $\sm(L,i) \oplus \sm(L,j)$ is the \emph{numerical} value
        whose binary expansion is the result of  the XOR operation between the binary expansions of $\sm(L,i)$ and $\sm(L,j)$, and 
        \item there exists an index $z\in\{i, \ldots, j-1\}$ such that 
        $\sm(L,z)<\trunc(t_{ij},\sm(L,i))+2^{-t_{ij}}$ \emph{and} $\sm(L,z+1) \geq \trunc(t_{ij},\sm(L,i))+2^{-t_{ij}}.$
       \end{enumerate}
    \end{restatable}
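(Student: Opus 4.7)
The plan is to exploit the strict monotonicity of the sequence $\sm(L,i),\sm(L,i+1),\dots,\sm(L,j)$ together with a careful analysis of which bits can change between consecutive elements. A preliminary observation that will be used throughout is that $\sm(L,k+1)>\sm(L,k)$: indeed, $\sm(L,k)-\trunc(\alpha_{k+1},\sm(L,k))<2^{-\alpha_{k+1}}$, so adding $2^{-\alpha_{k+1}}$ to the truncation produces a value strictly larger than $\sm(L,k)$. Combined with the hypothesis $\sm(L,m)<1$, this places every $\sm(L,k)$ in $[0,1)$, so its binary expansion is an ordinary bit string after the radix point, and the quantities $t_k$ and $t_{ij}$ are well defined.

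The central claim is that, for every $k$ with $i\le k\le j$, the binary expansion of $\sm(L,k)$ coincides with that of $\sm(L,i)$ on bits $1,\dots,t_{ij}-1$. This follows by transitivity: by the definition of $t_{k'}$ and the minimality of $t_{ij}$, each pair $\sm(L,k'),\sm(L,k'+1)$ with $i\le k'<j$ agrees on the first $t_{k'}-1\ge t_{ij}-1$ bits. Given that these leading $t_{ij}-1$ bits are identical across the sub-sequence, the strict increase of $\sm$ forces the bit at position $t_{ij}$ to be non-decreasing in $k$: a flip from $1$ to $0$ at position $t_{ij}$ would entail a strict drop in numerical value, since the higher-order bits are fixed and the lower-order bits cannot compensate a change of magnitude $2^{-t_{ij}}$.

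Now pick any $k^{\star}\in\{i,\dots,j-1\}$ attaining $t_{k^{\star}}=t_{ij}$. By the preceding paragraph, $\sm(L,k^{\star})$ and $\sm(L,k^{\star}+1)$ agree on bits $1,\dots,t_{ij}-1$ but differ at bit $t_{ij}$, and the non-decreasing property forces the transition to go from $0$ to $1$. Consequently bit $t_{ij}$ of $\sm(L,i)$ equals $0$ while that of $\sm(L,j)$ equals $1$, and both agree on bits $1,\dots,t_{ij}-1$. Hence $\sm(L,i)\oplus\sm(L,j)$ has its leading $1$-bit exactly at position $t_{ij}$, so $2^{-t_{ij}}\le \sm(L,i)\oplus\sm(L,j)<2^{-t_{ij}+1}$ and $\lceil-\log_2(\sm(L,i)\oplus\sm(L,j))\rceil=t_{ij}$, establishing item (1). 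Item (2) then falls out by choosing $z=k^{\star}$: writing $A=\trunc(t_{ij},\sm(L,i))+2^{-t_{ij}}$, one sees that $A$ has the same first $t_{ij}-1$ bits as $\sm(L,i)$, a $1$ at position $t_{ij}$, and zeros afterwards; comparing bit by bit with $\sm(L,z)$ (bit $t_{ij}$ equal to $0$) yields $\sm(L,z)<A$, and comparing with $\sm(L,z+1)$ (bit $t_{ij}$ equal to $1$) yields $\sm(L,z+1)\ge A$.

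The main obstacle is justifying rigorously that the bit at position $t_{ij}$ is monotone along the sub-sequence; once this is in place, both items reduce to transparent bit-level bookkeeping. Everything else is a direct reading of the recursive definition of $\sm$.
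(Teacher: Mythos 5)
Your proposal is correct and takes essentially the same route as the paper's proof: both arguments rest on the strict monotonicity of the $\sm$ values, the shared prefix of length $t_{ij}-1$ across the block $\{\sm(L,i),\dots,\sm(L,j)\}$, and the location of the (unique) $0\to 1$ transition of the $t_{ij}^{th}$ bit, from which the XOR formula of item (1) and the bisection index $z$ of item (2) follow by direct bit comparison. The only difference is organizational: you make explicit the strict increase of $\sm$ and the non-decreasing behaviour of the $t_{ij}^{th}$ bit, whereas the paper starts from the first bit $s$ at which $\sm(L,i)$ and $\sm(L,j)$ differ and proves $s=t_{ij}$ by the two inequalities $s\leq t_{ij}$ and $s\geq t_{ij}$.
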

}

{For completeness, we recall that the XOR  $\sm(L,i) \oplus \sm(L,j)$
is easily computable, in most programming languages,  with a single built-in instruction.
Armed with Lemma \ref{lemma:l_bit}, we now prove our constructive result. 
To avoid overburdening the notation, when the pair of integers $(i,j)$
in (\ref{tij}) 
is  clear from the context, we will simply denote 
$t_{ij}$ with $t$.
\begin{lemma}\label{lemma:linear_time_alg}
    Let $L=\langle\ell_1,\dots,\ell_m\rangle$ be a list of integers, associated with the ordered symbols $s_1\prec \dots \prec s_m$. If $\sm(L,m)<1$, then one can construct in $O(m)$ time an alphabetic code $C$ for which the codeword assigned to symbol $s_i$ has length at most $\min(\ell_i, m-1)$, for each $i=1,\dots,m.$
\end{lemma}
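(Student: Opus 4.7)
The plan is to assemble the alphabetic code via a Cartesian-tree construction on the vector of first-differing-bit positions $t_i$, after a single $O(m)$ pass through the $\sm$-recurrence.

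First I would compute $\sm(L,1),\ldots,\sm(L,m)$ directly from the recurrence (\ref{sm}) of Definition \ref{denakatsu}: in the standard RAM model each step consists of one \trunc\ followed by the addition of a power of two, both $O(1)$ word operations, giving $O(m)$ total time. Then, invoking Lemma \ref{lemma:l_bit}(1), I would compute $t_i = \lceil -\log_2(\sm(L,i) \oplus \sm(L,i+1)) \rceil$ for $i = 1,\ldots,m-1$ in $O(m)$ time.

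Given the vector $(t_1,\ldots,t_{m-1})$, I would build the code tree recursively: on a range $[i,j]$ of symbols with $i<j$, pick the index $z \in [i,j-1]$ attaining $\min_{i \le k < j} t_k$, and recurse on $[i,z]$ and $[z+1,j]$. By Lemma \ref{lemma:l_bit}(2), this $z$ coincides exactly with the threshold split at which the $\sm$-values cross $\trunc(t_{ij},\sm(L,i))+2^{-t_{ij}}$, so the induced split corresponds to a valid alphabetic partitioning; the recursion bottoms out at singletons, which become the leaves $s_1,\ldots,s_m$ in order. This recursion is precisely the classical Cartesian-tree construction on $(t_1,\ldots,t_{m-1})$, and can be executed in $O(m)$ total time by the standard stack-based incremental algorithm. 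Reading left/right edges as $0/1$ yields the alphabetic code $C$.

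For the length bounds, the resulting tree has $m$ leaves and exactly $m-1$ internal nodes (one per Cartesian-tree node), so any root-to-leaf path crosses at most $m-1$ edges, giving each codeword length $\le m-1$. For the bound by $\ell_i$, the tree just constructed coincides, as an unordered branching structure, with the tree obtained from Nakatsu's $\ell_i$-length alphabetic code by contracting every chain of internal nodes that have only one codeword-bearing descendant; since Nakatsu's tree places $s_i$ at depth exactly $\ell_i$ (the sufficient part of Theorem \ref{condition_1}) and chain-contraction can only decrease depths, the codeword for $s_i$ in $C$ has length $\le \ell_i$. Combining the two bounds yields length $\le \min(\ell_i,m-1)$. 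The main obstacle is making the identification with the contracted Nakatsu tree rigorous; I would handle this by induction on the range size, where at each level Lemma \ref{lemma:l_bit}(2) guarantees that the shallowest essential branching of Nakatsu's tree within $[i,j]$ occurs exactly at the index $z$ that minimizes $t_z$, matching our recursive split and preserving the correspondence down to the leaves.
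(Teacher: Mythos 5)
Your proposal is correct, and it reaches the same tree as the paper does (recursive splitting of $\{\sm(L,1),\dots,\sm(L,m)\}$ at the smallest differing bit position), but by a genuinely different route in the two places where the work lies. For the $O(m)$ time bound, the paper finds each split index by an exponential-plus-binary search and then solves the recurrence $T(m)\leq\max_k\{T(m-k)+T(k)+O(\log\min(k,m-k))\}$ (Fredman's technique); you instead precompute $t_i=\lceil-\log_2(\sm(L,i)\oplus\sm(L,i+1))\rceil$ for consecutive pairs and build the min-Cartesian tree over $(t_1,\dots,t_{m-1})$ with the classical stack-based algorithm, which gives linearity without any recurrence analysis — a cleaner implementation, and it is legitimate to identify your split with the paper's threshold split since within any range all $\sm$-values share the first $t_{ij}-1$ bits, bit $t_{ij}$ is monotone along the strictly increasing $\sm$-sequence, and hence the minimizer of $t_k$ is unique and is exactly the crossing index $z$ of Lemma \ref{lemma:l_bit}(2). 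For the bound $\ell'_i\leq\ell_i$, the paper argues directly that the splitting bit positions strictly increase along any root-to-leaf path, so $s_a$ is isolated after at most $\max(\alpha_a,\alpha_{a+1})\leq\ell_a$ bisections; you instead identify the constructed tree with the contraction of the trie of Nakatsu's code (where $s_i$ sits at depth exactly $\ell_i$) and use that contraction never increases depth. That identification is correct, but it is the part you only sketch: to make it rigorous you need that every codeword in a range has length at least $t_{ij}$ (which follows from $t_k\leq\alpha_{k+1}$), that the first branching of the trie in that range occurs exactly at position $t_{ij}$ and splits at the unique crossing index, and you should contract \emph{all} internal nodes with a single child (not only chains above a single leaf, as your wording suggests), so that the contracted trie is full and matches your tree. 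With these small points filled in, both the construction and the length bounds go through; the paper's direct depth-counting argument avoids the identification step, while your version buys a simpler, textbook linear-time construction.
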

\begin{proof}
    To prove the lemma, we design a $O(m)$ time algorithm to 
    construct the {binary} tree associated with the code 
    $C$, that is, the binary tree whose root-to-leaves paths give
    the codewords of $C$. 
    
It is useful to recall that Nakatsu  \cite[Theorem 1]{nakatsu1991bounds} showed that the code constructed by taking the first $\ell_i$ bits of the binary expansion of $\sm(L, i)$ as the codeword associated to the symbol $s_i$, for each $i=1,\dots,m$, is alphabetic.
Our algorithm constructs the binary code tree
    by using  a 
    \textit{subset} of the first $\ell_i$ bits of the binary expansion of $\sm(L, i)$, for each $i=1,\dots,m$. 
    The algorithm builds  the desired tree 
    in a top-down fashion, proceeding by iterated bisections of the set
    \begin{equation}
        {\cal S}= \{\sm(L,1), \sm(L,2), \ldots , \sm(L,m)\}.
    \end{equation}
    Initially, the root of the tree is associated to the 
    whole set
    ${\cal S}$. 
    To perform the first bisection, we proceed as follows.

    For each $a=1,\dots,m-1$,
    let $t_a$ be the {smallest} integer for which the binary expansions of the pair $\sm(L,a)$ and $\sm(L,a+1)$ differ on the $t_a^{th}$ bit, and let $t=t_{1m}=\min_{1\leq a<m}t_a$. From Lemma \ref{lemma:l_bit}, we have that $t=\lceil-\log_2 (\sm(L,1) \oplus \sm(L,m))\rceil$.     
We now compute the  index $k\in \{1, \ldots , m-1\}$ for which it holds that
    \begin{equation}
        \sm(L,k)< \trunc(t,\sm(L,1)) + 2^{-t}\label{primok_1}
    \end{equation}
    and
    \begin{equation}
        \sm(L,k+1)\geq \trunc(t,\sm(L,1))+2^{-t}.\label{primok_2}
    \end{equation}
    We associate the left child of the root to 
    the subset 
    $$\{\sm(L,1), \ldots , \sm(L,k)\},$$ 
    and the 
    right child of the root to the subset 
    $$\{\sm(L,k+1), \ldots , \sm(L,m)\}.$$
    By 2) of Lemma \ref{lemma:l_bit}, both $\{\sm(L,1), \ldots , \sm(L,k)\}$
    and $\{\sm(L,k+1), \ldots , \sm(L,m)\}$ are non empty.
     The subset 
    $\{\sm(L,1), \ldots , \sm(L,k)\}$
    represents the nodes that are in the left subtree of the root. 
    Moreover,  
    by the definition of the function $\trunc$ and from (\ref{primok_1}) and (\ref{primok_2}), each numerical value $\sm(\cdot,\cdot)\in \{\sm(L,1), \ldots , \sm(L,k)\}$ has a  binary expansion with $0$ in the $t^{th}$ position.
    Analogously, the subset  
    $\{\sm(L,k+1), \ldots , \sm(L,m)\}$
    represents the nodes that are in the right subtree of the root and 
    each  value $\sm(\cdot,\cdot)\in \{\sm(L,k+1), \ldots , \sm(L,m)\}$ has a  binary expansion with $1$ in the $t^{th}$ position. 
    The algorithm proceeds recursively on the \textit{left} subset $\{\sm(L,1), \ldots , \sm(L,k)\}$
    and \textit{right} subset $\{\sm(L,k+1), \ldots , \sm(L,m)\}$. It stops when the cardinality of the subset is equal to $1$, in such a case we have just a leaf, 
    or when it is equal to $2$, where we have a final subtree with two sibling leaves.
    The complete pseudo-code of the algorithm is presented in 
    \textbf{Algorithm \ref{alg:alg}}.

    \medskip
    
    \begin{algorithm}
        \label{alg:routine}
        \caption{SubTree($i$,$j$)}
        \If{$i=j$}{
            Construct the leaf $v$ associated with symbol $s_i$.
            
            \Return $v$
        }
        \ElseIf{$i+1=j$}{
            Construct a node $v$ whose left child is the leaf $s_i$ and whose right child is the leaf $s_j$.
            
            \Return $v$
        } 
        \Else{
            $t = \lceil-\log_2 (\sm(L,i)\oplus\sm(L,j))\rceil$
            
            Let $k\in\{i, \ldots, j-1\}$ be the index such that $\sm(L,k)<\trunc(t,\sm(L,i))+2^{-t}$ and $\sm(L,k+1) \geq \trunc(t,\sm(L,i))+2^{-t}$.
            
            $l$ = SubTree($i$,$k$)

            $r$ = SubTree($k+1$,$j$)

            Construct a node $v$ whose left child is the tree $l$ and whose right child is the tree $r$.

            \Return $v$
        }
    \end{algorithm}

    \begin{algorithm}\label{alg:alg}
        \caption{ConstructTree($L$)}
        \KwIn{The list of lengths $L=\langle\ell_1,\dots,\ell_m\rangle$}
        \KwOut{An alphabetic code tree with codeword lengths $\ell'_i\leq 
        \min(\ell_i, m-1)$, for each $i=1,\dots,m.$}
        Compute $\sm(L,i)$ for each $i=1,\dots,n$.

        \Return SubTree(1,$m$)
        
    \end{algorithm}

\bigskip

    The procedure presented above for the construction of the binary tree can be implemented to work in $O(m)$ time and space, by exploiting a technique 
    devised in \cite{fredman1975two}.
    {In fact, the 
    crucial operation in Line 9 of Algorithm \ref{alg:routine} can be implemented in $O(\log \min(s,j-s))$ time, where $s=k-i+1$, by first checking 
    if $\sm(L,r)<\trunc(t,\sm(L,i))+2^{-t}$ with $r=\lceil (j-1+i)/2\rceil$ to discover whether the index $k$ belongs to $\{i,\dots,r-1\}$ or to $\{r,\dots,j-1\}$. Successively,
    we use an exponential search to find the interval in which the value $k$ is located,
    and,  finally, a binary search in such interval to find $k$.
     More precisely, we proceed in the following way: if $k\in\{i,\dots,r-1\}$, we check the validity of  the inequality
    $\sm(L,i+2^v)<\trunc(t,\sm(L,i))+2^{-t}$ in Line 9 for increasing
    values $v=0, 1, 2, 3, \ldots,$
    till we find the first  value of $v$ that satisfies 
    the inequality $\sm(L,i+2^v)\geq \trunc(t,\sm(L,i))+2^{-t}$.
    After $\log s$ step we shall have located the  value of $k$ we seek
    within an interval
    of size at most $2s$.
    A successive binary search will exactly determine the value of $k$.
    All together, this procedure takes $O(\log s)$ steps. Similarly, if $k\in\{r,\dots,j-1\}$, we apply the same procedure with the caveat that we start the exponential search from the right-end of the interval
    $\{r,\dots,j-1\}$. In this way, the procedure will take $O(\log (j-s))$ time. Putting the two cases together, the whole operation takes $O(\log \min(s,j-s))$ time.

    Therefore, by denoting with $T(m)$ the number of operations performed by the algorithm to construct the tree with $m$ leaves, because of Line 9, 10, and 11, we have that
    \begin{align}
        T(m) &\leq\max_{1\leq k\leq m-1}\{T(m-k)+T(k)\nonumber\\&\quad+ \alpha \log_2 \min(k,m-k)+ \beta\} \nonumber\\  
        &= \max_{1\leq k\leq m/2} \{T(m-k)+T(k) + \alpha \log_2 k + \beta\}\label{eq:relazione}, 
    \end{align}
    for suitable constant values
    $\alpha$ and $\beta$.
    }
    
    One can see that the right-hand-side of (\ref{eq:relazione})
   grows at most linearly with $m$
    (e.g., Theorem 11, p. 185 of \cite{KMbook}); 
    therefore  the algorithm {ConstructTree($L$)} requires total time $O(m)$.

    {
    
    The tree produced by algorithm {ConstructTree($L$)} is alphabetic by construction since the recursive bisections
    of the set 
     $$\{\sm(L,1), \sm(L,2), \ldots , \sm(L,m)\}$$
     preserves the symbol order $s_1\prec \dots \prec s_m$.
     What is left to prove is 
     that the procedure constructs a tree
     whose root-to-leaves paths (i.e., codewords)
     have length $\ell'_a\leq\min(\ell_a,m-1)$, for each $a=1,\dots,m$. 

     The inequality $\ell'_a\leq m-1$ holds because the constructed binary
     tree is \textit{full}, that is, each node internal node of the tree has
     two children. This is due to 2) of Lemma \ref{lemma:l_bit} and Line 9,10 and 11 of
     \textbf{Algorithm \ref{alg:routine}}.

    We now prove that $\ell'_a \leq \ell_a$, for each $a=1,\dots,m$. Consider first the cases $a\in\{2,\dots,m-1\}$. By definition of $\sm$, we have that
    \begin{equation}\label{eq:def_sum_a}
        \sm(L,a) = \trunc(\alpha_a, \sm(L,a-1))+ 2^{-\alpha_a},
    \end{equation}
    where $\alpha_a = \min (\ell_{a-1},\ell_a)$.
    Therefore,
    since $\trunc(b,x)$ is the binary fraction one gets by considering only the first $b$ bits of the binary expression of $x$
    (cfr. (\ref{trunc})), from (\ref{eq:def_sum_a}) one obtains that the binary expansion of $\sm(L,a)$ differs from the binary expansion of $\sm(L,a-1)$ on at least the $\alpha_a^{th}$ bit.  
    Because of Line 9,10 and 11 of \textbf{Algorithm \ref{alg:routine}}, this implies that after at most $\alpha_a$ 
    bisection steps the value of $\sm(L,a)$ can no longer be in the same subset of $\sm(L,a-1)$. By symmetry, after at most $\alpha_{a+1}$ steps the value $\sm(L,a)$ can no longer be in the same subset of $\sm(L,a+1)$. All together, after at most $\max(\alpha_a,\alpha_{a+1})$ bisection steps the value $\sm(L,a)$ does not belong to the same subset of either $\sm(L,a-1)$ or $\sm(L,a+1)$, that is, $\sm(L,a)$ 
    must correspond to a leaf in the code-tree. Therefore, 
    the corresponding root-to-leaves path has length 
    \begin{align*}
        \ell'_a &\leq \max(\alpha_a,\alpha_{a+1})
    =\max(\min (\ell_{a-1},\ell_a), \min (\ell_{a},\ell_{a+1}))\\
    &\leq \ell_a.
    \end{align*}
    Analogously, for $a=1$ it holds that
    \begin{equation*}
        \ell'_1 \leq \alpha_2 =\min (\ell_{1},\ell_2)\leq \ell_1,
    \end{equation*}
    and for $a=m$ it holds that 
    \begin{equation*}
        \ell'_m \leq \alpha_m =\min (\ell_{m-1},\ell_m)\leq \ell_m.
    \end{equation*}
    }
\end{proof}

\medskip
    Figure \ref{fig:alphabetic_trees} shows the alphabetic trees produced on the list of lengths $L=\langle 6,6,5,2,9,9,8,6,3,2\rangle$, by taking the first $\ell_i$ bits of the binary expansion of $\sm(L, i)$ (as done by Nakatsu \cite{nakatsu1991bounds})
    and by the application of our  linear time algorithm, respectively.
    As an example, one can see how the codeword associated with the symbol $s_7$ in 
    our code tree, which is $10001$, is only a subsequence of the codeword associated with the symbol $s_7$ in the first tree, that is $\underline{100}00\underline{0}0\underline{1}$, where the underlined bits are those that correspond to the codeword of our
    code  tree.

    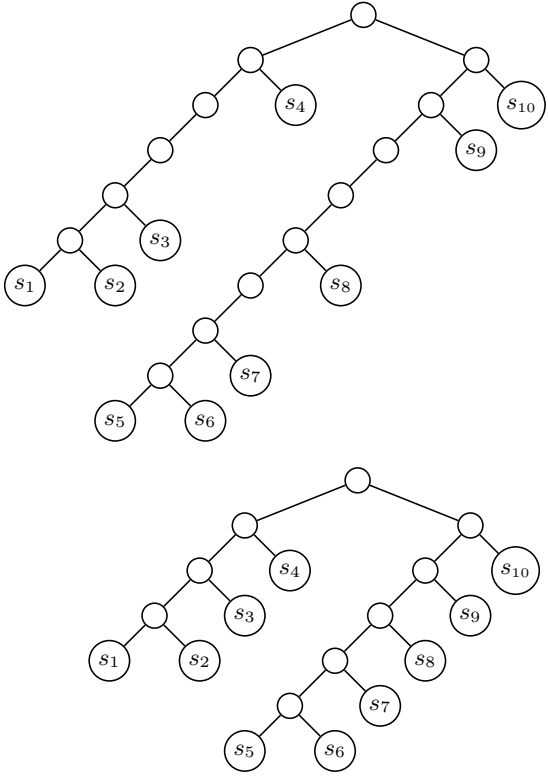
\begin{figure}
        \begin{tabular}{@{}c@{}}
            \begin{tikzpicture}[nodes={circle, draw}, semithick, scale=0.30, level distance=2cm,
            level 1/.style={sibling distance=10cm},
            level 2/.style={sibling distance=4cm},
            level 3/.style={sibling distance=1cm}
            level 4/.style={sibling distance=1cm}]
            \node {}
                child {node {} 
                  child {node {}
                    child {
                            node {}
                            child {
                                node {}
                                child {
                                    node {}
                                    child {
                                        node[inner sep=2pt] {\small$s_1$}
                                    }
                                    child {
                                        node[inner sep=2pt] {\small$s_2$}
                                    }
                                }
                                child { node[inner sep=2pt] {\small$s_3$}}
                            }
                            child[missing] {}
                    }
                    child[missing] {}
                  }
                  child {node[inner sep=2pt] {\small$s_4$}}
                }
                child {node {}
                    child {node {}
                        child {node {}
                                child {
                                    node {}
                                    child {
                                        node {}
                                        child {
                                            node {}
                                            child {
                                                node {}
                                                child {
                                                    node {}
                                                    child {
                                                        node[inner sep=2pt] {\small$s_5$}
                                                    }
                                                    child {
                                                        node[inner sep=2pt] {\small$s_6$}
                                                    }
                                                }
                                                child {node[inner sep=2pt] {\small$s_7$}}
                                            }
                                            child[missing] {}
                                        }
                                        child {node[inner sep=2pt] {\small$s_8$}}
                                    }
                                    child[missing] {}
                                }
                                child[missing] {}
                            }
                        child {node[inner sep=2pt] {\small$s_9$}
                        }
                    }
                    child {node[inner sep=2pt] {\small$s_{\scriptscriptstyle  10}$}   
                    }
                };
            \end{tikzpicture}
             \\[\abovecaptionskip]
        \end{tabular}
        \vspace{\floatsep}
        \hspace{1cm}
        \begin{tabular}{@{}c@{}}
            \begin{tikzpicture}[nodes={circle, draw}, semithick, scale=0.30, level distance=2cm,
            level 1/.style={sibling distance=10cm},
            level 2/.style={sibling distance=4cm},
            level 3/.style={sibling distance=1cm}
            level 4/.style={sibling distance=1cm}]
            \node {}
                child {node {} 
                  child {node {}
                    child {
                            node {}
                            child {
                                node[inner sep=2pt] {{\small$s_1$}}
                            }
                            child {node[inner sep=2pt] {\small$s_2$}}
                    }
                    child {node[inner sep=2pt] {\small$s_3$}}
                  }
                  child {node[inner sep=2pt] {\small$s_4$}}
                }
                child {node {}
                    child {node {}
                        child {node {}
                                child {
                                    node {}
                                    child {
                                        node {}
                                        child {
                                            node[inner sep=2pt] {\small$s_5$}
                                        }
                                        child {node[inner sep=2pt] {\small$s_6$}}
                                    }
                                    child {node[inner sep=2pt] {\small$s_7$}}
                                }
                                child {node[inner sep=2pt] {\small$s_8$}}
                            }
                        child {node[inner sep=2pt] {\small$s_9$}
                        }
                    }
                    child {node[inner sep=2pt] {\small$s_{\scriptscriptstyle10}$}   
                    }
                };
            \end{tikzpicture}
             \\[\abovecaptionskip]
        \end{tabular}
        \caption{The alphabetic trees constructed on the list of lengths $L=\langle 6,6,5,2,9,9,8,6,3,2\rangle$. On the left the tree
        constructed by taking the first $\ell_i$ bits of the binary expansion of
        $\sm(L, i)$, (according to Nakatsu \cite{nakatsu1991bounds}),
         on the right the tree  constructed by our algorithm ConstructTree($L$).}
        \label{fig:alphabetic_trees}
        \end{figure}
}

Before presenting our results on the length of 
the optimal alphabetic code that improves upon the upper bound
(\ref{Dagan_bound}), we briefly show a problematic issue in the analysis of the bound (\ref{Dagan_bound}) claimed in \cite{dagan}.

\begin{remark}
    \label{remark}
    In the proof of the upper bound (\ref{Dagan_bound}) 
    presented in \cite{dagan}, 
    the authors proceed in the following way: Starting from an input probability distribution $\phi=(\phi_1,\dots,\phi_m)$, they first consider the fully extended distribution $\nu\,'=(0,\phi_1,0,\dots,0,\phi_m,0)$
    and, subsequently, apply the classic algorithm of Gilbert and Moore \cite{gilbert1959variable} to construct an alphabetic code $C$ for the distribution $\nu\,'=(0,\phi_1,0,\dots,0,\phi_m,0)$. The average length of the constructed code is less than $H(\nu\,')+2=H(\phi)+2$.
    Successively, the authors prune the binary tree representing the alphabetical code $C$ and show that the average length of the new code so obtained satisfies the upper bound (\ref{Dagan_bound}).
   The  codeword lengths associated with the null
   probabilities are left unspecified in \cite{dagan}, while for 
   non-null probabilities the codeword lengths are equal to 
   $\lceil-\log_2 \phi_i\rceil + 1$, as it is required in the 
   algorithm by Gilbert and Moore.

Next, we uncover the problematic issue in the analysis claimed in \cite{dagan} by a numeric example. Successively, 
in Lemma \ref{lemma_3} of  the Appendix B,
we generalize our example to an infinite number of cases.

\smallskip

  Let $\phi=(\phi_1, \phi_2, \phi_3)=\left(\frac{1}{2},\frac{1}{4}, \frac{1}{4}\right)$ and  $\nu\,'$ be its fully extended distribution $\left(0,\frac{1}{2},0, \frac{1}{4}, 0, \frac{1}{4}, 0\right)$.
    As said above, the length of the codeword produced by the algorithm of Gilbert and Moore for the symbol of probability $\phi_1$ is equal to $\lceil-\log_2 \phi_1\rceil + 1=2$, the length of the codeword produced for the symbol of probability $\phi_2$ is equal to 3, and the length of the codeword produced for the symbol of probability $\phi_3$ is equal to 3.
   However, 
    we will show that there does not exist any alphabetic 
    code for $\nu\,'$ with lengths of the codewords
    \begin{equation}\label{nonesiste}
        L=\langle f_1, \ldots , f_7\rangle=\langle x_1, 2, x_2, 3, x_3, 3, x_4\rangle,
    \end{equation}
    for \textit{any} choice of 
     the codewords length $x_1, x_2, x_3, {x_4}\in \mathbb{N}_+$ associated
     to the symbols of null probability.

    Let us start with the choice $x_1=2$ and $x_2=x_3=x_4=3$. It is immediate to verify that in such a case $\sm(L, 7) = 1$. 
    Therefore, by Theorem \ref{condition_1} there cannot exist an alphabetic code with 
    the chosen lengths. In fact:
    \begin{align*}
        \sm(L,2) &= \trunc(2, \sm(L,1)) + 2^{-2} = 2^{-2}\\
        \sm(L,3) &= \trunc(2, \sm(L,2)) + 2^{-2} = 2^{-1}\\
        \sm(L,4) &= \trunc(3, \sm(L,3)) + 2^{-3} = 2^{-1} + 2^{-3}\\
        \sm(L,5) &= \trunc(3, \sm(L,4)) + 2^{-3} = 2^{-1} + 2^{-2}\\
        \sm(L,6) &= \trunc(3, \sm(L,5)) + 2^{-3} = 2^{-1} + 2^{-2} + 2^{-3}\\
        \sm(L,7) &= \trunc(3, \sm(L,6)) + 2^{-3} = 2^{-1} + 2^{-2} + 2^{-2}\\
        &= 1.
    \end{align*}
    Similarly, if we increase the lengths $x_i$, the value of the function $\sm$ remains the same, since the values of $\alpha_i =\min(f_{i-1}, f_i)$ do not change. If we decrease the lengths $x_i$, considering all possible combinations, we can also verify that the value of $\sm$ remains greater than or equal to 1.
    In fact, let us take for example $x_1=1$, $x_2=x_3=2$ and $x_4=3$, we have that:
    \begin{align*}
        \sm(L,2) &= \trunc(1, \sm(L,1)) + 2^{-1} = 2^{-1}\\
        \sm(L,3) &= \trunc(2, \sm(L,2)) + 2^{-2} = 2^{-1} + 2^{-2}\\
        \sm(L,4) &= \trunc(2, \sm(L,3)) + 2^{-2} = 2^{-1} + 2^{-2}+2^{-2}\\
        &=1\\
        \sm(L,5) &= \trunc(2, \sm(L,4)) + 2^{-2} = {2^{-1} + 2^{-2}+2^{-2}} \\&\quad+ 2^{-2}{=1.25}\\
        \sm(L,6) &= \trunc(2, \sm(L,5)) + 2^{-2} = 2^{-1} + 2^{-2}+2^{-2}\\&\quad+2^{-2} + 2^{-2}{=1.5}\\
        \sm(L,7) &= \trunc(3, \sm(L,6)) + 2^{-3} = 2^{-1} + 2^{-2}+2^{-2}\\&\quad+2^{-2}+2^{-2} + 2^{-3} {= 1.625}.
    \end{align*}
    
    Therefore, by Theorem \ref{condition_1} there cannot exist an alphabetic code with lengths (\ref{nonesiste}).
\end{remark}

\bigskip
We now present our new results for alphabetic codes. We will divide  
our analysis into two cases: The first is about dyadic distributions,
the second case is for everything else.
 We recall that a dyadic distribution $\phi = (\phi_1,\dots,\phi_m)$ is a 
 probability distribution where each $\phi_i$ is equal to $2^{-k_i}$, for suitable integers
$k_i>0$.

\begin{theorem}
    \label{theorem_using_yeung}
    For any dyadic probability distribution $\phi=(\phi_1,\dots,\phi_m)$ 
    on the symbols $s_1 \prec \dots \prec s_m$, we can
    construct in linear time  an alphabetic code $C$ with
    \begin{align}\label{dyadicnostra}  
        \mathbb{E}[C] {\leq} H(\phi)+1-\phi_1-\phi_m. 
    \end{align}
\end{theorem}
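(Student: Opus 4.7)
The plan is to exhibit concrete codeword lengths that realize the bound (\ref{dyadicnostra}), verify that they satisfy Nakatsu's condition, and then invoke Lemma~\ref{lemma:linear_time_alg} to obtain a linear-time construction. Since $\phi$ is dyadic, I write $\phi_i = 2^{-k_i}$ for positive integers $k_i$, and set $L = \langle \ell_1,\dots,\ell_m\rangle$ with $\ell_1 = k_1$, $\ell_m = k_m$, and $\ell_i = k_i+1$ for $2 \le i \le m-1$. Using $\sum_i \phi_i k_i = H(\phi)$ together with $\sum_{i=2}^{m-1}\phi_i = 1-\phi_1-\phi_m$, a one-line calculation yields
\[
\sum_{i=1}^m \phi_i\,\ell_i \;=\; H(\phi) + 1 - \phi_1 - \phi_m,
\]
so any alphabetic code whose codeword lengths are bounded by the $\ell_i$'s meets the target bound.

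To exhibit such a code, and hence to deduce $\sm(L,m)<1$ via Theorem~\ref{condition_1}, I would take the Gilbert--Moore code and shorten its first and last codewords by a single bit. Recall that Gilbert--Moore assigns to $s_i$ the first $k_i+1$ bits of the binary expansion of $F_i = \sum_{j<i}\phi_j + \phi_i/2$; the dyadic structure forces $F_1 = 2^{-k_1-1}$ and $F_m = 1 - 2^{-k_m-1}$, so the two boundary Gilbert--Moore codewords are $c_1 = 0^{k_1}1$ and $c_m = 1^{k_m+1}$. Dropping the trailing bit in each produces $c_1' = 0^{k_1}$ and $c_m' = 1^{k_m}$, yielding a candidate alphabetic code with lengths exactly $L$.

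The main obstacle is to check that this shortened code remains alphabetic (lexicographic order is immediate; the substance is prefix-freeness, since the interior codewords still have their original length $k_i+1$). The decisive inequalities, all following from the strict positivity of the $\phi_i$'s, are $F_i > 2^{-k_1}$ and $F_i > 2^{-k_i-1}$ for every $i\ge 2$ (ensuring that the first $k_1$ bits of the interior codeword $c_i$ contain a $1$, and that $c_i \ne 0^{k_i+1}$; this handles both directions against $c_1'$), together with the symmetric bounds $F_i < 1 - 2^{-k_m}$ and $F_i < 1 - 2^{-k_i-1}$ for every $i \le m-1$ (handling $c_m'$). Once prefix-freeness is established, Theorem~\ref{condition_1} gives $\sm(L,m)<1$, Lemma~\ref{lemma:linear_time_alg} constructs in $O(m)$ time an alphabetic code whose codeword lengths do not exceed the $\ell_i$'s, and the average-length computation above finishes the proof.
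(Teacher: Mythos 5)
Your proposal is correct, and it reaches the bound by a route that differs from the paper in one essential ingredient. The paper uses the same length profile $\ell_1=\lceil-\log_2\phi_1\rceil$, $\ell_m=\lceil-\log_2\phi_m\rceil$, $\ell_i=\lceil-\log_2\phi_i\rceil+1$ otherwise, and the same closing computation, but it justifies the \emph{existence} of an alphabetic code with these lengths by citing Yeung's theorem \cite[Thm.~5]{yeung1991alphabetic}, which holds for arbitrary distributions; it then feeds this into Lemma~\ref{lemma:linear_time_alg} to get the $O(m)$ construction. You instead build the witness code by hand: take the Gilbert--Moore codewords \cite{gilbert1959variable} (first $k_i+1$ bits of $F_i=\sum_{j<i}\phi_j+\phi_i/2$), use dyadicity to pin down $c_1=0^{k_1}1$ and $c_m=1^{k_m+1}$, drop the last bit of each, and check prefix-freeness against the interior codewords via $2^{-k_i-1}\le F_i$, $F_i>2^{-k_1}$, $F_i<1-2^{-k_i-1}$, $F_i<1-2^{-k_m}$ (all of which do hold, and do imply what you claim, since the $F_i$ are dyadic rationals with terminating expansions); the ``only if'' direction of Theorem~\ref{condition_1} then gives $\sm(L,m)<1$ and Lemma~\ref{lemma:linear_time_alg} supplies the linear-time code with lengths at most the $\ell_i$'s. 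What each approach buys: yours is self-contained and elementary, exploiting dyadicity so that no external existence theorem beyond Gilbert--Moore is needed (the ordering claim you call ``immediate'' does deserve one line — it follows from the established prefix-freeness together with $c_1'$ being all zeros and $c_m'$ all ones — but this is not a gap); the paper's citation of Yeung is shorter and yields the same length profile for non-dyadic distributions as well, which is why the same machinery reappears in its Theorem~\ref{theorem_2}.
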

\begin{proof}
    The idea of the proof is simple and is based on 
    Yeung's analysis \cite[Thm. 5]{yeung1991alphabetic}. Yeung proved  that, given a probability distribution $\phi = (\phi_1,\dots,\phi_m)$,  there exists 
    an alphabetic code $C$ for $\phi$ with lengths of the codewords $L=\langle\ell_1,\dots,\ell_m\rangle$ equal to:
    \begin{equation*}
    \ell_i = \begin{cases}
            \lceil -\log_2 \phi_i \rceil & \text{ if either } i = 1 
            \text{ or } i= m,\\
            \lceil -\log_2 \phi_i \rceil + 1 & \text{ otherwise}.
        \end{cases}
    \end{equation*}
    {From Lemma \ref{lemma:linear_time_alg}, the code $C$ can be constructed in $O(m)$ time.}

   Using the additional hypothesis that $\phi$ is {a dyadic distribution} we have that $\lceil -\log_2 \phi_i \rceil = -\log_2 \phi_i$ for each $i=1,\ldots,m$. Therefore, the average length of the alphabetic code $C$ with lengths $L$ is
    \begin{align*}
        \mathbb{E}[C] &{\leq} \sum_{i=1}^m \phi_i\,\ell_i\\
        &= \phi_1\,(-\log_2 \phi_1) + \phi_m\,(-\log_2 \phi_m) \\
        &\quad+ \sum_{i=2}^{m-1} \phi_i\,(-\log_2 \phi_i + 1)\\
        &= -\sum_{i=1}^m \phi_i\,\log_2 \phi_i + \sum_{i=2}^{m-1} \phi_i\\
        &= H(\phi) + 1 - \phi_1 -\phi_m.
    \end{align*}
\end{proof}

We now turn our attention to the general case
of non-dyadic probability distributions 
$\phi = (\phi_1,\dots,\phi_m)$.

{Let us first consider the case in which $\phi_1$ and $\phi_m$ are positives.
We also need the following property of the function $\sm$.}

\begin{lemma}
    \label{lemma_2}
    Let $ \ell_1,\dots,\ell_m$ be a sequence of positive
    integers, and let  $k\geq\max_{i} \ell_i$. The function
    $\sm$ on the list 
  $L=\langle f_1, f_2, \ldots , f_{2m-1}\rangle=\langle\ell_1,k,\ell_2,k, \dots,k,\ell_m\rangle$ of $2m-1$ elements satisfies the following
  inequality:
    \begin{equation}
        \sm (L, 2m-1) \leq 2^{-\ell_1} + 2^{-\ell_m} + 2\,\sum_{i=2}^{m-1} 2^{-\ell_i}.
    \end{equation}
\end{lemma}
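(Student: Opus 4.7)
The plan is to exploit the fact that truncation only decreases a non-negative quantity, which gives the simple monotonicity
$\sm(L,i) \leq \sm(L,i-1) + 2^{-\alpha_i}$
for every $i \geq 2$, and then to bookkeep the indices of $\alpha_i = \min(f_{i-1}, f_i)$ induced by the alternating structure of the list $L$.

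First, I would observe that for any non-negative binary fraction $x$ and any positive integer $j$ we have $\trunc(j, x) \leq x$, directly from the definition of $\trunc$ in (\ref{trunc}). Substituting this into the recursion (\ref{sm}) yields $\sm(L,i) \leq \sm(L,i-1) + 2^{-\alpha_i}$. Since $\sm(L,1) = 0$, an immediate induction on $i$ gives the telescoped bound
\begin{equation*}
    \sm(L, 2m-1) \leq \sum_{i=2}^{2m-1} 2^{-\alpha_i}.
\end{equation*}

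Next, I would identify the values of the $\alpha_i$'s explicitly using the alternating shape of $L$. Writing $f_{2j-1} = \ell_j$ for $j = 1,\dots,m$ and $f_{2j} = k$ for $j = 1,\dots,m-1$, and using the hypothesis $k \geq \max_i \ell_i$, the minimum on each adjacent pair is always the $\ell$-term, so that $\alpha_{2j} = \min(\ell_j, k) = \ell_j$ for $j=1,\dots,m-1$, and $\alpha_{2j+1} = \min(k, \ell_{j+1}) = \ell_{j+1}$ for $j = 1,\dots,m-1$.

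Plugging these values into the telescoped bound and regrouping gives
\begin{align*}
    \sum_{i=2}^{2m-1} 2^{-\alpha_i}
    &= \sum_{j=1}^{m-1} 2^{-\ell_j} + \sum_{j=1}^{m-1} 2^{-\ell_{j+1}} \\
    &= 2^{-\ell_1} + 2^{-\ell_m} + 2\sum_{j=2}^{m-1} 2^{-\ell_j},
\end{align*}
which is exactly the desired upper bound. There is no real obstacle here: the only subtlety is keeping the parity indexing of $\alpha_{2j}$ versus $\alpha_{2j+1}$ straight so that $\ell_1$ and $\ell_m$ are correctly counted once while the interior $\ell_j$'s are counted twice. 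The role of the hypothesis $k \geq \max_i \ell_i$ is precisely to ensure that every $\alpha_i$ reduces to an $\ell$-term rather than $k$, so no $2^{-k}$ contributions appear in the final sum.
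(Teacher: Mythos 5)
Your proposal is correct and follows essentially the same route as the paper: both bound $\trunc(\alpha_i,\sm(L,i-1))\leq \sm(L,i-1)$, telescope from $\sm(L,1)=0$ to get $\sm(L,2m-1)\leq\sum_{i=2}^{2m-1}2^{-\alpha_i}$, and then use $k\geq\max_i\ell_i$ to identify each $\alpha_i$ with an $\ell$-term, counting $\ell_1$ and $\ell_m$ once and the interior $\ell_j$'s twice. Your parity bookkeeping of the $\alpha_{2j}$ and $\alpha_{2j+1}$ indices matches the paper's computation, so no gap remains.
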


\begin{proof}
By the definition of number $\alpha_i$ (see Definition \ref{denakatsu}),  it holds that
    $$
        \alpha_2 = \min(f_1,f_2) = \ell_1,
    $$
and 
    $$
        \alpha_{2m-1} = \min(f_{2m-2}, f_{2m-1}) = \ell_{m}.
    $$
For each $i\in\{3, 5, \dots, 2m-3\}$, it holds that
    $$
        \alpha_i = \min(f_{i-1},f_i) = \min(f_i, f_{i+1}) = \alpha_{i+1} = \ell_{\frac{i+1}{2}}.
    $$
By the definition of the function $\sm$,
for each index $i$ we have that
    \begin{align*}
        \sm(L,i) &= \trunc(\alpha_i, \sm(L, i-1))+2^{-\alpha_i}\\
        &\leq \sm(L, i-1) + 2^{-\alpha_i}.
    \end{align*}
Thus, noticing that $\sm(L,1)=0$, we have that
    \begin{align*}
        \sm(L, 2m-1) &\leq \sum_{i=2}^{2m-1} 2^{-\alpha_i}\\ 
        &=  2^{-\ell_1} + 2^{-\ell_m} + 2\,\sum_{i=2}^{m-1} 2^{-\ell_i}.
    \end{align*}
\end{proof}

\begin{theorem}
    \label{theorem_2}
    For any non-dyadic probability distribution $\phi=(\phi_1,\dots,\phi_m)$
    on the symbols $s_1 \prec \dots \prec s_m$ {with $\phi_1,\phi_m>0$}, we can
    construct in linear time  an alphabetic code $C$ with
    \begin{align}
    \mathbb{E}[C] &{\leq} H(\phi)+2-\phi_1\,(2-\log_2 \phi_1 -\lceil-\log_2 \phi_1\rceil)\nonumber\\
    &\quad-\phi_m\,(2-\log_2 \phi_m -\lceil-\log_2 \phi_m\rceil)\nonumber\\&\quad-\sum_{i=1}^{m-1}\min(\phi_i,\phi_{i+1})\label{primobound}\\
    &< H(\phi)+2-\phi_1-\phi_m -\sum_{i=1}^{m-1}\min(\phi_i,\phi_{i+1}).\label{primobound_v2}
    \end{align}
\end{theorem}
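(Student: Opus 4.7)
My plan is to use the partially extended distribution $\nu=(\phi_1,0,\phi_2,0,\ldots,0,\phi_m)$ as a scaffolding: I first build an alphabetic tree $T'$ for $\nu$ by invoking the \textbf{ConstructTree} algorithm of Lemma~\ref{lemma:linear_time_alg}, and then prune its zero-probability leaves to obtain an alphabetic tree $T$ for $\phi$. For the scaffolding I assign the length list $L=\langle \ell_1,k,\ell_2,k,\ldots,k,\ell_m\rangle$ in which the positions carrying a $\phi_i$ receive the Yeung-style values $\ell_1=\lceil-\log_2\phi_1\rceil$, $\ell_m=\lceil-\log_2\phi_m\rceil$, and $\ell_i=\lceil-\log_2\phi_i\rceil+1$ for $2\leq i\leq m-1$, while every zero-position receives the common large value $k=\max_i\ell_i$.

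To invoke Lemma~\ref{lemma:linear_time_alg} I need $\sm(L,2m-1)<1$. Lemma~\ref{lemma_2} directly gives $\sm(L,2m-1)\leq 2^{-\ell_1}+2^{-\ell_m}+2\sum_{i=2}^{m-1}2^{-\ell_i}\leq \phi_1+\phi_m+\sum_{i=2}^{m-1}\phi_i=1$, and the non-dyadic hypothesis on $\phi$ forces strict inequality, since any non-dyadic $\phi_j$ satisfies $2^{-\lceil-\log_2\phi_j\rceil}<\phi_j$. Lemma~\ref{lemma:linear_time_alg} then constructs $T'$ in $O(m)$ time, with every $\phi_i$ placed at depth at most $\ell_i$.

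Next I prune the zero leaves from $T'$: each zero leaf is deleted together with its parent, and its sibling subtree is promoted one level upward. Call the resulting alphabetic tree for $\phi$ by $T$, and denote the sibling subtree of the zero $0_j$ in $T'$ by $S_j$. Because each pruning leaves the internal structure of every other subtree intact, the depth changes are additive: $\mathrm{depth}_T(\phi_k)=\mathrm{depth}_{T'}(\phi_k)-|\{j:\phi_k\in S_j\}|$, so the total expected-length saving equals $\sum_{j=1}^{m-1}\mathrm{prob}(S_j)$. The crucial per-zero estimate $\mathrm{prob}(S_j)\geq \min(\phi_j,\phi_{j+1})$ follows from the alphabetic structure of $T'$: $S_j$ is a contiguous range of leaves lying wholly to the left or wholly to the right of $0_j$, and hence it must contain either $\phi_j$ (the leaf at position $2j-1$ of $\nu$) or $\phi_{j+1}$ (the leaf at position $2j+1$).

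Combining the two pieces, $\mathbb{E}[T]\leq \sum_{i=1}^m\phi_i\ell_i-\sum_{j=1}^{m-1}\min(\phi_j,\phi_{j+1})$. Setting $r_i:=\lceil-\log_2\phi_i\rceil+\log_2\phi_i\in[0,1)$, the direct computation that underlies Yeung's bound~\eqref{Yeung_bound1} yields $\sum_i\phi_i\ell_i=H(\phi)+\sum_i\phi_ir_i+(1-\phi_1-\phi_m)\leq H(\phi)+2-\phi_1(2-r_1)-\phi_m(2-r_m)$, which is the Yeung part of~\eqref{primobound}. Inequality~\eqref{primobound_v2} then follows immediately from $2-r_i\geq 1$. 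I expect the main obstacle to be the pruning analysis itself: arguing cleanly that the savings at different zeros are truly additive (even when $S_j$ contains other zero leaves) and that the per-zero lower bound $\mathrm{prob}(S_j)\geq \min(\phi_j,\phi_{j+1})$ is uniform, both of which hinge on the strict alternation of zeros and non-zeros in $\nu$ (so no two zeros can be siblings in $T'$) and on the fact that each pruning removes only a single leaf together with its parent while leaving every other subtree untouched.
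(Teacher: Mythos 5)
Your proposal follows essentially the same route as the paper's own proof: the same partially extended distribution $\nu$, the same Yeung-style length list fed into Lemma~\ref{lemma_2} and Lemma~\ref{lemma:linear_time_alg}, and the same pruning of the $m-1$ inserted zero leaves with per-zero saving $\min(\phi_j,\phi_{j+1})$; your accounting of the prunings (additivity via the sibling subtrees $S_j$, using that no two zeros can be siblings) is a correct and in fact slightly more explicit version of the paper's argument, and your closing algebra reproduces the bound (\ref{usoyeung}).

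There is, however, one case of the statement your proof does not cover as written. The theorem assumes only $\phi_1,\phi_m>0$, so interior probabilities may be zero; then your assignment $\ell_i=\lceil-\log_2\phi_i\rceil+1$ is undefined for those $i$, and your chain $\sm(L,2m-1)\leq 2^{-\ell_1}+2^{-\ell_m}+2\sum_{i=2}^{m-1}2^{-\ell_i}\leq\phi_1+\phi_m+\sum_{i=2}^{m-1}\phi_i=1$ breaks down (a zero $\phi_i$ cannot dominate its $2\cdot 2^{-\ell_i}$ term). The paper handles this by giving \emph{every} zero-probability position --- the inserted zeros and the zero $\phi_i$'s alike --- the common length $k$, and then choosing $k$ \emph{sufficiently large} so that the extra contribution $2\sum_{i:\phi_i=0}2^{-k}$ is absorbed by the strict slack $\sum_{i:\phi_i>0}2^{-\lceil-\log_2\phi_i\rceil}<1$ coming from non-dyadicity; with your fixed choice $k=\max_i\ell_i$ this need not hold. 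The repair is routine, but it is needed for the statement as given, and it matters downstream: the paper's next theorem (the case $\phi_1=0$ or $\phi_m=0$) applies Theorem~\ref{theorem_2} to a trimmed distribution that may well contain interior zeros. A last, trivial remark: to get the \emph{strict} inequality (\ref{primobound_v2}) you should use $2-r_i>1$ (valid since $r_i<1$) together with $\phi_1,\phi_m>0$; your ``$2-r_i\geq 1$'' yields only the non-strict form.
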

\begin{proof}
   We  first show  that 
    there exists an alphabetic code $C'$ for the partially extended distribution 
    $$\nu=(\nu_1, \nu_2, \ldots , \nu_{2m-1})=(\phi_1,0,\dots,0,\phi_m),$$
    with 
    average length upper bounded by 
    \begin{align}
    &H(\phi)+2-\phi_1\,(2-\log_2 \phi_1 -\lceil-\log_2 \phi_1\rceil)\nonumber\\
    &\quad-\phi_m\,(2-\log_2 \phi_m -\lceil-\log_2 \phi_m\rceil).\label{prima}
    \end{align}
  Successively,  we will prune the binary tree representing
$C'$ by removing the additional $m-1$ leaves (corresponding to the 
$m-1$ null probabilities in $\nu$), 
 thus reducing the quantity (\ref{prima}) to (\ref{primobound}).
    
    Let $k$ be an integer greater than $ \max_{j:\phi_j > 0} \lceil-\log_2 \phi_j \rceil +1$ (the value of $k$ will be chosen later, in 
    order to satisfy necessary inequalities), and let $L = \langle \ell_1, \dots, \ell_{2\,m-1}\rangle$ be the lengths of the codewords for  the distribution $\nu=(\phi_1,0,\dots,0,\phi_m)$, 
    defined as follows:
    \begin{equation}
        \ell_i = \begin{cases}\label{defell}
            k & \text{ if } \nu_i=0,\\
            \lceil-\log_2 \nu_i \rceil & \text{ if } i=1 \text{ or } i=2m-1,\\
            \lceil-\log_2 \nu_i \rceil +1 & \text{ if } \nu_i > 0.
        \end{cases}
    \end{equation}
    
    From Lemma \ref{lemma_2} we have that 
    \begin{equation*}
        \begin{aligned}
            \sm(L, 2\,m-1) &\leq 2^{-\lceil -\log_2 \phi_1 \rceil}+ 2^{-\lceil -\log_2 \phi_m \rceil}\\&\quad+ 2\,\sum_{i\neq1,m:\phi_i > 0} 2^{-(\lceil -\log_2 \phi_i \rceil +1)} + 2\,\sum_{i:\phi_i = 0} 2^{-k}\\
            &=\sum_{i:\phi_i > 0} 2^{-\lceil -\log_2 \phi_i \rceil} + 2\,\sum_{i:\phi_i = 0} 2^{-k} < 1,
        \end{aligned}
    \end{equation*}
    where the last inequality holds {because $\sum_{i:\phi_i > 0} 2^{-\lceil -\log_2 \phi_i \rceil}<1$} since $\phi$ is \emph{a non-dyadic} distribution\footnote{Notice that if the distribution is dyadic, this inequality might not hold. However, as we will see later the theorem can be extended to dyadic distributions.}, and we can choose a sufficiently large value for $k$.
    
    By Theorem \ref{condition_1} there exists an alphabetic code with  lengths 
    $\ell_i$ defined in (\ref{defell}),
    and {from Lemma \ref{lemma:linear_time_alg} we can construct a code $C'$ in linear time. The  average length of such a code satisfies
    }

    {
    \begin{align}
        \mathbb{E}[C'] &\leq \phi_1 \lceil-\log_2 \phi_1\rceil +\phi_m \lceil-\log_2 \phi_m\rceil \nonumber\\&
        \quad+ \sum_{i\neq 1,m:\phi_i>0} \phi_i (\lceil-\log_2 \phi_i\rceil +1)\nonumber\\
        &\leq \phi_1 \lceil-\log_2 \phi_1\rceil +\phi_m \lceil-\log_2 \phi_m\rceil \nonumber\\
        &\quad+\sum_{i\neq 1,m:\phi_i>0} \phi_i (-\log_2 \phi_i +2)\nonumber\\&\quad\mbox{(since $\lceil x \rceil < x + 1$)}\nonumber\\
        &= -\phi_1(-\log_2 \phi_1 + 2 -\lceil-\log_2 \phi_1\rceil)\nonumber\\
        &\quad-\phi_m(-\log_2 \phi_m + 2 -\lceil-\log_2 \phi_m\rceil)\nonumber\\
        &\quad+\sum_{i:\phi_i>0} \phi_i (-\log_2 \phi_i +2)\nonumber\\
        &= -\sum_{i:\phi_i>0} \phi_i \log_2 \phi_i + 2\nonumber\\
        &\quad-\phi_1(-\log_2 \phi_1 + 2 -\lceil-\log_2 \phi_1\rceil)\nonumber\\
        &\quad-\phi_m(-\log_2 \phi_m + 2 -\lceil-\log_2 \phi_m\rceil)\nonumber\\
        &= H(\phi)+2-\phi_1(2-\log_2 \phi_1 -\lceil-\log_2 \phi_1\rceil)\nonumber\\
        &\quad-\phi_m(2-\log_2 \phi_m -\lceil-\log_2 \phi_m\rceil).\label{usoyeung} 
    \end{align}
    }
    
    We now eliminate the $m-1$ leaves associated with the {\textit{additional}} \textit{null} probabilities in $\nu$, proceeding
    similarly to \cite{dagan,de1993binary}. 
    First, observe that the leaves
associated with the probabilities $\nu_i$ appear, from left to right,
    in the same order as the probabilities $\nu_i$ (this is due to the fact that the tree represents an alphabetic code). {Moreover, observe that the binary tree of the code $C'$ is full by construction, i.e., each node has exactly zero or two children.}  
    When we prune a leaf $x$ {associated with an additional null probability}, we bring up one 
    level the whole sub-tree $T$ whose root $r$ {is the 
    sibling of $x$.} 
    The tree $T$  contains either the leaf associated 
    with  a probability in $\nu$ that immediately 
    follows the {additional} null probability associated 
    with  $x$, or it contains the leaf associated to a probability in $\nu$ that immediately 
    precedes $x$.  Hence, if $x$ is the ``left" {child} of $a$,  
    its pruning causes a bump up of one level of the leaf that 
    follows $x$, and if $x$ is the 
    ``right" {child} of $a$, it causes a  bump up  of one level of the leaf that 
    precedes  $x$.
    Figure \ref{fig:example} explains the procedure pictorially.

    \begin{figure}[H]
        \centering
        \includegraphics[scale=0.5]{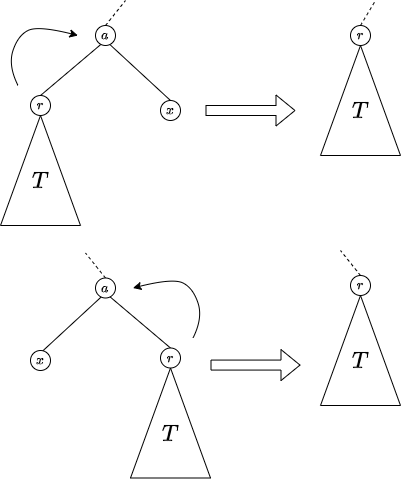}
        \caption{The figure shows the two cases discussed above: the elimination of $x$ bumps up of one level the leaf that precedes it (above), and the leaf that follows it (below).}
        \label{fig:example}
    \end{figure}
    
    Hence, the elimination of each leaf $x$ associated with {an additional} null probability in $\nu$ bumps up one level at least one leaf among the two leaves that precede and follow $x$.
    In total, we bump up leaves whose total probability is at least
    \begin{equation}\label{minfi}
       \sum_{i=1}^{m-1} \min(\phi_i,\phi_{i+1}).
    \end{equation}
    
    We recall
    that the construction of the alphabetic code with average cost (\ref{usoyeung}) requires linear time. Similarly, the elimination of the $m-1$ leaves associated with the {additional} \textit{null} probabilities in $\nu$ requires linear time too, {since it can be implemented through a simple tree visit}. 
    
    Thus, from (\ref{usoyeung}) and (\ref{minfi}) we conclude that 
    the new alphabetic code has a cost upper bounded by
    \begin{align*}
        &H(\phi)+2-\phi_1\,(2-\log_2 \phi_1 -\lceil-\log_2 \phi_1\rceil)\\
        &\quad-\phi_m\,(2-\log_2 \phi_m -\lceil-\log_2 \phi_m\rceil)-\sum_{i=1}^{m-1} \min(\phi_i,\phi_{i+1}).
    \end{align*}
\end{proof}

{Let us now consider the case where $\phi_1$ and $\phi_m$ can be zero.}
{
\begin{theorem}
    For any non-dyadic probability distribution $\phi=(\phi_1,\dots,\phi_m)$
    on the symbols $s_1 \prec \dots \prec s_m$, we can
    construct in linear time  an alphabetic code $C$ with
    \begin{align}
    \mathbb{E}[C] &\leq H(\phi)+2-\phi_{\ell}\,(2-\log_2 \phi_{\ell} -\lceil-\log_2 \phi_{\ell}\rceil)\nonumber\\
    &\quad-\phi_{k}\,(2-\log_2 \phi_{k} -\lceil-\log_2 \phi_{k}\rceil)\nonumber\\
    &\quad-\sum_{i=\ell}^{k-1}\min(\phi_i,\phi_{i+1}) +\phi_{\ell} + \phi_{k}\nonumber\\
    &=H(\phi)+2-\phi_{\ell}\,(1-\log_2 \phi_{\ell} -\lceil-\log_2 \phi_{\ell}\rceil)\nonumber\\
    &\quad-\phi_{k}\,(1-\log_2 \phi_{k} -\lceil-\log_2 \phi_{k}\rceil)-\sum_{i=\ell}^{k-1}\min(\phi_i,\phi_{i+1}),
    \end{align}
    where $\ell = \arg\min_{i} \{\phi_i > 0\}$ and $k = \arg\max_{i} \{\phi_i > 0\}$.
\end{theorem}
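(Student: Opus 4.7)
My strategy is to reduce to Theorem \ref{theorem_2} by stripping the leading and trailing zero-probability symbols from $\phi$, invoking Theorem \ref{theorem_2} on the resulting ``core'' distribution, and then grafting the stripped symbols back onto the produced code tree at controlled extra cost. Concretely, let $\phi' = (\phi_\ell, \phi_{\ell+1}, \ldots, \phi_k)$. Since $\phi_\ell, \phi_k > 0$ by the choice of $\ell$ and $k$, and since $\phi$ is non-dyadic, $\phi'$ is also a non-dyadic distribution satisfying the hypotheses of Theorem \ref{theorem_2}; moreover $H(\phi')=H(\phi)$ because the discarded symbols carry zero probability. Applying Theorem \ref{theorem_2} to $\phi'$ yields, in $O(k-\ell+1) = O(m)$ time, an alphabetic code tree $T'$ for the symbols $s_\ell \prec \cdots \prec s_k$ whose average length is bounded by
\begin{equation*}
  \mathbb{E}[T'] \leq H(\phi) + 2 - \phi_\ell\bigl(2 - \log_2 \phi_\ell - \lceil -\log_2 \phi_\ell \rceil\bigr) - \phi_k\bigl(2 - \log_2 \phi_k - \lceil -\log_2 \phi_k \rceil\bigr) - \sum_{i=\ell}^{k-1} \min(\phi_i, \phi_{i+1}).
\end{equation*}

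Next, I will graft the missing symbols onto $T'$. If $\ell > 1$, I replace the leaf of $T'$ associated with $s_\ell$ by a new internal node whose \emph{right} child is a fresh leaf for $s_\ell$ and whose \emph{left} child is the root of any binary tree (e.g., a left comb, trivially built in $O(\ell)$ time) with leaves associated, in left-to-right order, with $s_1, \ldots, s_{\ell-1}$. I perform the symmetric operation on the right when $k < m$, attaching a binary tree with leaves $s_{k+1}, \ldots, s_m$ as the right subtree of a new node whose left child is a leaf for $s_k$. The resulting tree $T$ has leaves $s_1, \ldots, s_m$ in the correct left-to-right order, so $T$ is alphabetic, and the total construction time remains $O(m)$.

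Finally, I bound the cost. The depth of each leaf $s_{\ell+1}, \ldots, s_{k-1}$ is unchanged with respect to $T'$; the depths of $s_\ell$ and $s_k$ each increase by at most one; and the newly added leaves all carry zero probability and contribute nothing. Therefore
\begin{equation*}
  \mathbb{E}[T] \leq \mathbb{E}[T'] + \phi_\ell + \phi_k,
\end{equation*}
and substituting the bound on $\mathbb{E}[T']$ gives the first displayed inequality of the theorem. The equivalent closed form is immediate once one merges $-\phi_\ell(2-\log_2\phi_\ell-\lceil-\log_2\phi_\ell\rceil)+\phi_\ell$ into $-\phi_\ell(1-\log_2\phi_\ell-\lceil-\log_2\phi_\ell\rceil)$, and similarly for the $\phi_k$ contribution. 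I do not foresee serious obstacles in this plan: the only delicate point is to ensure that the alphabetic property is preserved by the grafting step, which is automatic from the placement of the new subtrees as a left (respectively right) sibling of $s_\ell$ (respectively $s_k$); the cases $\ell=1$ or $k=m$ only make the claimed bound looser than necessary, since the corresponding grafting is then skipped.
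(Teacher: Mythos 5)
Your proposal is correct and follows essentially the same route as the paper: restrict to $\phi'=(\phi_\ell,\dots,\phi_k)$, apply Theorem \ref{theorem_2}, and graft subtrees for the leading and trailing zero-probability symbols so that only the leaves of $s_\ell$ and $s_k$ descend by one level, costing at most $\phi_\ell+\phi_k$. The only cosmetic difference is that the paper splits the argument into four explicit cases according to whether $\ell=1$ and/or $k=m$, whereas you handle these uniformly by noting the graft is simply skipped and the bound only becomes looser.
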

\begin{proof}
     Let $\phi'= (\phi_{\ell},\dots,\phi_k)$ be the probability distribution obtained by considering only the elements of $\phi$ between $\phi_{\ell}$ and $\phi_k$. From Theorem \ref{theorem_2}, we know that we can construct in linear time an alphabetic code $C'$ for $\phi'$ with
    \begin{align}
    \mathbb{E}[C'] &\leq H(\phi')+2-\phi_{\ell}\,(2-\log_2 \phi_{\ell} -\lceil-\log_2 \phi_{\ell}\rceil)\nonumber\\
    &\quad-\phi_{k}\,(2-\log_2 \phi_{k} -\lceil-\log_2 \phi_{k}\rceil)-\sum_{i=\ell}^{k-1}\min(\phi_i,\phi_{i+1})\label{eq:codeC'}.
    \end{align}
    We need to address four cases:
    \begin{enumerate}
        \item $\ell=1$ and $k=m$: in this case  $\phi=\phi'$ and the result 
        follows from Theorem \ref{theorem_2}.
        \item $\ell>1$ and $k<m$: then $\phi=(0,\dots,0,\phi_{\ell},\dots,\phi_k,0,\dots,0)$ and we have to create the leaves to allocate 
        the symbols $s_1,\dots,s_{\ell-1}$ and $s_{k+1},\dots,s_n$. To do so, we let the leaves 
        associated to $s_{\ell}$ and $s_k$ descend by one level and 
        we insert
        the subtrees $T$ and $T'$ with leaves $s_1,\dots,s_{\ell-1}$ and $s_{k+1},\dots,s_n$, respectively, as shown in Figure \ref{fig:zeri}. 
        
        \begin{figure}
        \centering
        \includegraphics[scale=0.4]{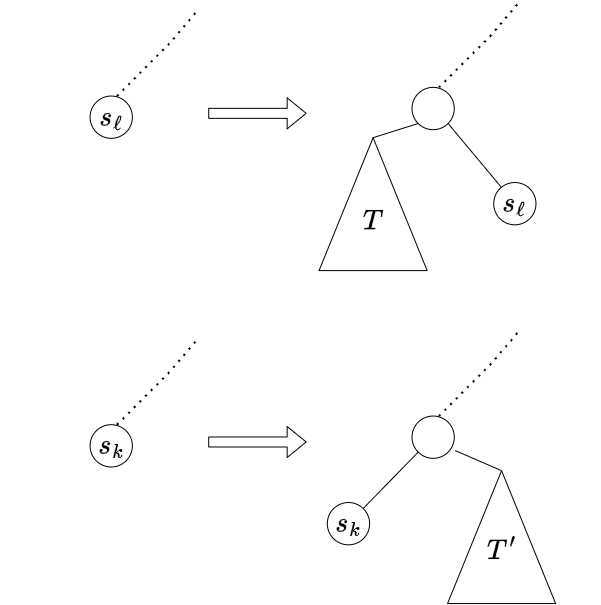}
        \caption{The figure shows the insertion of the subtrees $T$ and $T'$, with leaves $s_1,\dots,s_{\ell-1}$ and $s_{k+1},\dots,s_n$, respectively.}
        \label{fig:zeri}
        \end{figure}

        Thus, from (\ref{eq:codeC'}) and since $H(\phi)=H(\phi')$ we get that the average length of the new code $C$ obtained satisfies
        
        \begin{align}
        \mathbb{E}[C] &\leq H(\phi)+2-\phi_{\ell}\,(2-\log_2 \phi_{\ell} -\lceil-\log_2 \phi_{\ell}\rceil)\nonumber\\
        &\quad-\phi_{k}\,(2-\log_2 \phi_{k} -\lceil-\log_2 \phi_{k}\rceil)\nonumber\\
        &\quad-\sum_{i=\ell}^{k-1}\min(\phi_i,\phi_{i+1}) +\phi_{\ell} + \phi_{k}\nonumber.
    \end{align}
        
        \item $\ell=1$ and $k<m$:
        then $\phi=(\phi_{1},\dots,\phi_k,0,\dots,0)$ and we have to create the leaves for the symbols  $s_{k+1},\dots,s_n$. Therefore, proceeding similarly to case 2) we get that
        \begin{align}
        \mathbb{E}[C] &\leq H(\phi)+2-\phi_{\ell}\,(2-\log_2 \phi_{\ell} -\lceil-\log_2 \phi_{\ell}\rceil)\nonumber\\
        &\quad-\phi_{k}\,(2-\log_2 \phi_{k} -\lceil-\log_2 \phi_{k}\rceil)\nonumber\\&\quad-\sum_{i=\ell}^{k-1}\min(\phi_i,\phi_{i+1}) + \phi_{k}\nonumber\\
        &< H(\phi)+2-\phi_{\ell}\,(2-\log_2 \phi_{\ell} -\lceil-\log_2 \phi_{\ell}\rceil)\nonumber\\
        &\quad-\phi_{k}\,(2-\log_2 \phi_{k} -\lceil-\log_2 \phi_{k}\rceil)\nonumber\\
        &\quad-\sum_{i=\ell}^{k-1}\min(\phi_i,\phi_{i+1}) +\phi_{\ell} + \phi_{k}\nonumber.
    \end{align}
        \item $\ell>1$ and $k=m$: then $\phi=(0,\dots,0,\phi_{\ell},\dots,\phi_m)$ and we have to create the leaves for the symbols $s_1,\dots,s_{\ell-1}$.
        Therefore, again proceeding similarly to case 2) we get a new code $C$ with 
        \begin{align}
        \mathbb{E}[C] &\leq H(\phi)+2-\phi_{\ell}\,(2-\log_2 \phi_{\ell} -\lceil-\log_2 \phi_{\ell}\rceil)\nonumber\\
        &\quad-\phi_{k}\,(2-\log_2 \phi_{k} -\lceil-\log_2 \phi_{k}\rceil)\nonumber\\
        &\quad-\sum_{i=\ell}^{k-1}\min(\phi_i,\phi_{i+1}) + \phi_{\ell}\nonumber\\
        &< H(\phi)+2-\phi_{\ell}\,(2-\log_2 \phi_{\ell} -\lceil-\log_2 \phi_{\ell}\rceil)\nonumber\\
        &\quad-\phi_{k}\,(2-\log_2 \phi_{k} -\lceil-\log_2 \phi_{k}\rceil)\nonumber\\
        &\quad-\sum_{i=\ell}^{k-1}\min(\phi_i,\phi_{i+1}) +\phi_{\ell} + \phi_{k}\nonumber.
        \end{align}
    \end{enumerate}
\end{proof}
}

{
We have proved Theorem \ref{theorem_2} under the hypothesis that the probability
distribution $\phi=(\phi_1,\dots,\phi_m)$ is non-dyadic. One of the referees suggested
the following way to extend Theorem \ref{theorem_2} also to dyadic distributions.

\begin{corollary}\label{cor:ref}
     For any dyadic probability distribution $\phi=(\phi_1,\dots,\phi_m)$ 
    on the symbols $s_1 \prec \dots \prec s_m$, we can
    construct in linear time  an alphabetic code $C$ with
   \begin{align}
    \mathbb{E}[C] &\leq H(\phi)+2-\phi_1\,(2-\log_2 \phi_1 -\lceil-\log_2 \phi_1\rceil)\nonumber\\
    &\quad-\phi_m\,(2-\log_2 \phi_m -\lceil-\log_2 \phi_m\rceil)-\sum_{i=1}^{m-1}\min(\phi_i,\phi_{i+1})\label{bound:ref}\\
    &= H(\phi)+2-2\phi_1 - 2\phi_m -\sum_{i=1}^{m-1}\min(\phi_i,\phi_{i+1}).
    \end{align}
\end{corollary}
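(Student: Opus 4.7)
The plan is to adapt the proof of Theorem \ref{theorem_2} to the dyadic case by modifying the length list rather than the probability distribution. The only step in that proof which fails for dyadic $\phi$ is the strict inequality $\sum_{i:\phi_i>0} 2^{-\lceil-\log_2\phi_i\rceil}<1$ in the application of Lemma \ref{lemma_2}: for dyadic $\phi$ the sum equals exactly $1$, so Theorem \ref{condition_1} cannot be invoked. I would fix this by adding one extra bit to the length assigned to a single interior codeword, which strictly separates the Kraft sum from $1$ at the price of a controlled increase in expected length.

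Assuming $m\ge 3$, I would pick any interior index $j\in\{2,\dots,m-1\}$ and, writing $k_i=-\log_2\phi_i$, assign the following lengths to the partially extended distribution $\nu=(\phi_1,0,\phi_2,\dots,0,\phi_m)$: $\ell_1=k_1$ and $\ell_{2m-1}=k_m$ at the endpoints; $\ell_{2i-1}=k_i+1$ for every interior $i\ne j$; $\ell_{2j-1}=k_j+2$ at the chosen interior index; and $\ell_{2i}=k$ (a large constant) at the null positions. A direct application of Lemma \ref{lemma_2} then gives $\sm(L,2m-1)\le \phi_1+\phi_m+\sum_{i=2,i\ne j}^{m-1}\phi_i+\phi_j/2 = 1-\phi_j/2<1$, so Theorem \ref{condition_1} and Lemma \ref{lemma:linear_time_alg} produce, in linear time, an alphabetic code $C'$ for $\nu$ whose expected length is at most $H(\phi)+1-\phi_1-\phi_m+\phi_j$ (after the routine calculation using $\sum_i\phi_i k_i=H(\phi)$).

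The final step is to prune the $m-1$ null leaves exactly as in the proof of Theorem \ref{theorem_2}: each pruning bumps up at least one adjacent real leaf, saving in total at least $\sum_{i=1}^{m-1}\min(\phi_i,\phi_{i+1})$. Hence the pruned tree $C$ satisfies
\[
\mathbb{E}[C]\le H(\phi)+1-\phi_1-\phi_m+\phi_j-\sum_{i=1}^{m-1}\min(\phi_i,\phi_{i+1}),
\]
and invoking the trivial bound $\phi_j\le 1-\phi_1-\phi_m$ (which follows from $\sum_i\phi_i=1$ and nonnegativity) yields exactly the right-hand side $H(\phi)+2-2\phi_1-2\phi_m-\sum_{i=1}^{m-1}\min(\phi_i,\phi_{i+1})$ claimed in the corollary, which matches (\ref{bound:ref}) once one substitutes $\lceil-\log_2\phi_i\rceil=-\log_2\phi_i$.

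The main obstacle I anticipate is the corner case $m=2$: there is no interior index available to absorb the extra bit, and the perturbation-style approach of adjusting an internal length has no place to act. Moreover, on the only dyadic two-symbol distribution $\phi=(1/2,1/2)$ the stated formula evaluates to $1/2$, which is strictly below the length $1$ of the unique alphabetic code on two leaves; hence the corollary must implicitly assume $m\ge 3$, with $m\le 2$ covered separately by Theorem \ref{theorem_using_yeung}. A secondary, purely bookkeeping, point is verifying the pruning-savings argument inherited from Theorem \ref{theorem_2}, but this is a direct quotation of that proof and requires no new ideas.
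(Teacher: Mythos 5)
Your proof is correct and reaches the stated bound by a genuinely different route than the paper. The paper keeps the length assignment of Theorem \ref{theorem_2} untouched and instead perturbs the distribution: for $m\ge 4$ it replaces the dyadic $\phi$ by the non-dyadic $\hat\phi=(\phi_1,\phi_2-\epsilon,\phi_3+\epsilon,\phi_4,\dots,\phi_m)$, applies Theorem \ref{theorem_2} to $\hat\phi$, and lets $\epsilon\to 0$, using continuity of $H$ and $\min$ together with the observation that the tree built for $\hat\phi$ also serves for $\phi$ when $\epsilon$ is small. You instead repair the single failing step (the strict inequality $\sum_i 2^{-\lceil-\log_2\phi_i\rceil}<1$) at the level of the length list: giving one interior symbol an extra bit yields $\sm(L,2m-1)\le 1-\phi_j/2<1$, so Lemma \ref{lemma_2}, Theorem \ref{condition_1} and Lemma \ref{lemma:linear_time_alg} apply verbatim, and the surplus $\phi_j$ is absorbed via $\phi_j\le 1-\phi_1-\phi_m$, exactly the slack between $-\phi_1-\phi_m$ and the $-2\phi_1-2\phi_m$ in (\ref{bound:ref}). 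Your version buys a self-contained argument with no limit or continuity step (and no need to argue one fixed tree works for all small $\epsilon$), and it covers $m=3$, whereas the printed proof assumes $m\ge 4$; the paper's perturbation buys brevity by reusing Theorem \ref{theorem_2} as a black box. Your corner-case remark is also apt: for $m=2$ and $\phi=(1/2,1/2)$ the bound evaluates to $1/2$, below the unavoidable cost $1$, so the statement indeed needs $m\ge 3$ (a caveat the paper's proof shares), with small $m$ handled by Theorem \ref{theorem_using_yeung}.
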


\begin{proof}
Let $m\geq 4$. For a given dyadic probability distribution $\phi=(\phi_1,\dots,\phi_m)$, let us define
\begin{equation}\label{eq:hat}
\hat{\phi}=(\hat{\phi}_1,\hat{\phi}_2, \hat{\phi}_3\dots,\hat{\phi}_m)=
(\phi_1,\phi_2-\epsilon, \phi_3+\epsilon, \phi_4,  \dots,\phi_m)
\end{equation}
for a sufficiently small $\epsilon>0$. Since $\hat{\phi}$ is a non-dyadic probability
distribution, we obtain from Theorem \ref{theorem_2}  that  we can
    construct in linear time  an alphabetic code $C$ with
    \begin{align}
    \mathbb{E}[C] &\leq H(\hat{\phi})+2-\hat{\phi_1}\,(2-\log_2 \hat{\phi_1} -\lceil-\log_2 \hat{\phi_1}\rceil)\nonumber\\
    &\quad-\hat{\phi}_m\,(2-\log_2 \hat{\phi}_m -\lceil-\log_2 \hat{\phi}_m\rceil)
    -\sum_{i=1}^{m-1}\min(\hat{\phi}_i,\hat{\phi}_{i+1})\nonumber\\
    &\leq H(\hat{\phi})+2  
   -{\phi_1}\,(2-\log_2 {\phi_1} -\lceil-\log_2 {\phi_1}\rceil)\nonumber\\
   &\quad-{\phi_m}\,(2-\log_2 {\phi_m} -\lceil-\log_2 {\phi_m}\rceil)\nonumber\\
    &\quad-\min(\phi_1, \phi_2-\epsilon) -\min(\phi_2-\epsilon, \phi_3+\epsilon)\nonumber\\
    &\quad-\min(\phi_3+\epsilon, \phi_4)-
    \sum_{i=4}^{m-1}\min({\phi_i},{\phi_{i+1}}).\nonumber
    \end{align}
    Since the Shannon entropy $H(\cdot)$ and 
    the operator $\min(\cdot,\cdot)$ are continuous functions
    for $\epsilon>0$, 
     by letting
    $\epsilon\to 0$ we obtain that
\begin{align*}
    \mathbb{E}[C] &\leq H(\phi)+2-\phi_1\,(2-\log_2 \phi_1 -\lceil-\log_2 \phi_1\rceil)\nonumber\\
    &\quad-\phi_m\,(2-\log_2 \phi_m -\lceil-\log_2 \phi_m\rceil)-\sum_{i=1}^{m-1}\min(\phi_i,\phi_{i+1}).
    \end{align*}
The probability distribution $\hat{\phi}$ is not equal to $\phi$ for $\epsilon>0$. However,
the code tree constructed for $\hat{\phi}$ can be used for $\phi$, provided that
$\epsilon>0$ is sufficiently small.
\end{proof}

We remark that the bounds (\ref{dyadicnostra}) and (\ref{bound:ref}) are not comparable, in the sense that there are dyadic probability distributions $\phi$ for which (\ref{dyadicnostra}) 
is better than (\ref{bound:ref}), and probability distributions $\phi$ for which it holds
the opposite. For instance, if 
a dyadic $\phi=(\phi_1,\dots,\phi_m)$ is ordered in non decreasing fashion, 
then (\ref{bound:ref}) becomes 
$$\mathbb{E}[C] \leq H(\phi)+1-2\phi_1-\phi_m,$$
that is better than the bound (\ref{dyadicnostra}).
On the other hand, if a dyadic $\phi=(\phi_1,\dots,\phi_m)$ is ordered according to the 
``saw-tooth" order, that is, is such that 
$$\phi_1<\phi_2>\phi_3<\phi_4>\ldots >\phi_{m-1}<\phi_m,$$
with $m$ even, then (\ref{bound:ref}) becomes 
\begin{equation}\label{eq:ref}
\mathbb{E}[C] \leq H(\phi)+2\phi_2+2\phi_4\ldots +2\phi_{m-2}-\phi_1.
\end{equation}
One can easily see that there are dyadic probability distributions $\phi=(\phi_1,\dots,\phi_m)$
for which (\ref{dyadicnostra}) is better than (\ref{eq:ref}).
}

\begin{remark}
    It might be worthwhile to point out a way to (possibly) improve the upper
    bound (\ref{primobound})  of Theorem \ref{theorem_2}.  For simplicity,  
    let us assume that all probabilities $\phi_i$ have distinct values (but the reasoning holds in general).
    
    The idea is the following. {Let us consider the lengths defined in (\ref{defell}). After the pruning phase described in Theorem \ref{theorem_2}
    we have that between $\phi_i$ and $\phi_{i+1}$ it is the smaller probability
    to be bumped up, in the worst case. Therefore, let $S_1$ be the set of all indexes $i\in\{2,\dots,m-1\}$ for which holds that $\phi_i = \min(\phi_{i-1},\phi_i)$ \textit{and} $\phi_i=\min(\phi_i,\phi_{i+1})$, i.e., the set of the symbols that in the worst case are bumped up at least two times. Similarly, let $S_2$ be the set of all indexes $i\in \{2,\ldots,m-1\}$ for which holds that $\phi_i = \min(\phi_{i-1},\phi_i)$ or $\phi_i = \min(\phi_{i},\phi_{i+1})$ (but not both), i.e., the set of all symbols that in the worst case are bumped up at least one time (except the first and last symbol). Whereas, for the first and last symbol, let $S_3\subseteq \{1,m\}$ be a set such that $1\in S_3$ if $\min(\phi_1,\phi_2) = \phi_1$, and similarly $m\in S_3$ if $\min (\phi_{m-1},\phi_m)=\phi_m$.  Finally, let $S_4$ be the set of the remaining indexes (expect the first and the last one) that in the worst case are not bumped up, i.e. $S_4 = \{2,\ldots,m-1\}\setminus(S_1\cup S_2)$.  Consequently, 
    we can upper bound  the average length of the obtained alphabetic code in the following way:
    }

    \begin{align}
        \mathbb{E}[C] &\leq
        \sum_{i=2}^{m-1} \phi_i (\lceil-\log_2 \phi_i\rceil+1) + \phi_1(\lceil-\log_2 \phi_1\rceil)\nonumber\\
        &\quad+\phi_m(\lceil-\log_2 \phi_m\rceil)-2\sum_{i\in S_1} \phi_i - \sum_{i \in S_2} \phi_i
        -\sum_{i\in S_3} \phi_i\nonumber\\
        &=\sum_{i\in S_4} \phi_i(\lceil-\log_2 \phi_i\rceil+1)\nonumber\\
        &\quad+ \sum_{i\in S_1 \cup S_2 \cup \{1,m\}} \phi_i (\lceil-\log_2 \phi_i \rceil)-\sum_{i\in S_1} \phi_i -\sum_{i\in S_3} \phi_i\label{eq:avg_length}\\
        &<\sum_{i\in S_4} \phi_i(-\log_2 \phi_i+2) + \sum_{i\in S_1 \cup S_2 \cup \{1,m\}} \phi_i (\lceil-\log_2 \phi_i \rceil)\nonumber\\
        &\quad-\sum_{i\in S_1} \phi_i -\sum_{i\in S_3} \phi_i\nonumber\hspace{1cm} \mbox{(since $\lceil x \rceil < x+1$)}
        \\
        &= \sum_{i\in S_4} \phi_i(-\log_2 \phi_i+2) + \sum_{i\in S_1 \cup S_2 \cup \{1,m\}} \phi_i (\lceil-\log_2 \phi_i \rceil)\nonumber\\
        &\quad-\sum_{i\in S_1} \phi_i -\sum_{i\in S_3} \phi_i-\sum_{i\in S_1 \cup S_2 \cup \{1,m\}} \phi_i (-\log_2 \phi_i +2)\nonumber\\
        &\quad+\sum_{i\in S_1 \cup S_2 \cup \{1,m\}} \phi_i (-\log_2 \phi_i +2)\nonumber\\
        &= \sum_{i=1}^m \phi_i(-\log_2 \phi_i+2)\nonumber\\
        &\quad- \sum_{i\in S_1 \cup S_2 \cup \{1,m\}} \phi_i (2 -\log_2 \phi_i -\lceil-\log_2 \phi_i \rceil)\nonumber\\
        &\quad- \sum_{i\in S_1} \phi_i -\sum_{i\in S_3} \phi_i\nonumber\\
        &=H(\phi) + 2 - \sum_{i\in S_1 \cup S_2 \cup \{1,m\}} \phi_i (2 -\log_2 \phi_i -\lceil-\log_2 \phi_i \rceil) \nonumber\\
        &\quad- \sum_{i\in S_1} \phi_i -\sum_{i\in S_3} \phi_i\label{eq:avg_length2}\\
        &< H(\phi) + 2 - \phi_1 (2 -\log_2 \phi_1 -\lceil-\log_2 \phi_1 \rceil)\nonumber\\
        &\quad-\phi_m (2 -\log_2 \phi_m -\lceil-\log_2 \phi_m \rceil)-\sum_{i\in S_1 \cup S_2} \phi_i -\sum_{i\in S_1 \cup S_3} \phi_i\nonumber\\
        &\quad\mbox{(since $2-\log_2 \phi_i -\lceil-\log_2 \phi_i \rceil > 1$)}\nonumber\\ 
        &= H(\phi) +2 -\phi_1\,(2-\log_2 \phi_1 -\lceil-\log_2 \phi_1 \rceil)\nonumber\\
        &\quad-\phi_m\,(2-\log_2 \phi_m -\lceil-\log_2 \phi_m \rceil)-\sum_{i=1}^{m-1} \min(\phi_i,\phi_{i+1})\nonumber\\
        &\quad\mbox{(since $\sum_{i\in S_1 \cup S_2} \phi_i +\sum_{i\in S_1 \cup S_3} \phi_i = \sum_{i=1}^{m-1} \min(\phi_i,\phi_{i+1})).$}\nonumber
    \end{align}

    Therefore,   (\ref{eq:avg_length2}) and  (\ref{eq:avg_length}) 
     \textit{cannot be worse}  than the upper bound (\ref{primobound}) of Theorem \ref{theorem_2}. 
     We notice that the  right-hand side  of
     (\ref{eq:avg_length})  and (\ref{eq:avg_length2}) depends on 
    the   content of the sets $S_1, 
    S_2$, $S_3$ and {$S_4$}. Working out an explicit estimate 
    of the sums appearing on  the right-hand side of  (\ref{eq:avg_length})  and (\ref{eq:avg_length2}) 
    would provide further improvements to
    (\ref{primobound}).
    However, so doing seems challenging. We leave it as an open problem for future investigations.
\end{remark}

\smallskip

Theorem \ref{theorem_2} can be explicitly improved (under a slightly stronger
hypothesis), as shown in the following Corollary.
\begin{corollary}
    \label{corollary_2}
    For any  
    probability distribution $\phi=(\phi_1,\dots,\phi_m)$ on the symbols $s_1 \prec \dots \prec s_m$, with ${\phi_1},\phi_2, \phi_{m-1},{\phi_m}>0$, we can
    construct in linear time  an alphabetic code $C$ with
    \begin{equation}
        \begin{aligned}
            \mathbb{E}[C] &{\leq} H(\phi)+2-\phi_1(2-\log_2 \phi_1 -\lceil-\log_2 \phi_1\rceil)\nonumber\\
            &\quad-\phi_m(2-\log_2 \phi_{m} -\lceil-\log_2 \phi_{m}\rceil)\nonumber\\
            &\quad-\sum_{i=1}^{m-1} \min(\phi_i,\phi_{i+1})\nonumber\\
            &\quad-\phi_1\max(0,\lceil-\log_2 \phi_1 \rceil-\lceil-\log_2 \phi_2\rceil-2)\nonumber\\
            &\quad-\phi_m\max(0,\lceil-\log_2 \phi_m \rceil-\lceil-\log_2 \phi_{m-1}\rceil-2)\label{dacorollary_2}.
        \end{aligned}
    \end{equation}
\end{corollary}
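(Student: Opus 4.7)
My plan is to invoke Theorem \ref{theorem_2} to construct the alphabetic code $C$ in linear time and then sharpen the analysis of the leftmost and rightmost codeword lengths by exploiting a purely structural constraint on alphabetic trees. Writing $\ell'_1,\dots,\ell'_m$ for the depths of the symbols in the tree of $C$, I would show that $\ell'_1 \le \ell'_2$ and $\ell'_m \le \ell'_{m-1}$, and combine these inequalities with the depth bounds already available for $s_2$ and $s_{m-1}$ from Theorem \ref{theorem_2}'s construction. This yields the two extra correction terms that distinguish the corollary from (\ref{primobound}).

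The first step is to run Theorem \ref{theorem_2} on $\phi$, using the partially extended distribution $\nu=(\phi_1,0,\phi_2,0,\dots,0,\phi_m)$ together with the linear-time construction of Lemma \ref{lemma:linear_time_alg}. This produces a code $C$ whose average length satisfies (\ref{primobound}) and, crucially, it guarantees $\ell'_2 \le \lceil -\log_2 \phi_2\rceil + 1$ and $\ell'_{m-1}\le \lceil -\log_2 \phi_{m-1}\rceil + 1$: these are the Yeung-style initial lengths assigned to $\phi_2$ and $\phi_{m-1}$ in the partially extended list, and neither the bisection of Lemma \ref{lemma:linear_time_alg} nor the subsequent pruning of null leaves can increase any depth. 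Next, I would prove the structural inequality $\ell'_1\le \ell'_2$ as follows. Let $A$ be the lowest common ancestor of $s_1$ and $s_2$ in the tree of $C$; because the tree is alphabetic and $s_1$ is the unique symbol preceding $s_2$ in the order $\prec$, the left subtree of $A$ contains no symbol other than $s_1$ and is therefore exactly the leaf $s_1$, so $\ell'_1=\mathrm{depth}(A)+1$, while $s_2$ lies in the right subtree of $A$ at depth at least $\mathrm{depth}(A)+1$, giving $\ell'_2 \ge \ell'_1$. A symmetric argument at the right end yields $\ell'_m\le \ell'_{m-1}$.

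Finally, I would substitute $\ell'_1\le \lceil -\log_2\phi_2\rceil+1$ and $\ell'_m\le \lceil -\log_2\phi_{m-1}\rceil+1$ into the expected length computation. The delicate point, and the main obstacle, is the bookkeeping: Theorem \ref{theorem_2}'s bound (\ref{primobound}) already charges $s_1$ at most $\lceil-\log_2\phi_1\rceil-1$ bits whenever $\phi_1\le \phi_2$ through the $-\min(\phi_1,\phi_2)$ contribution, and it treats $s_1$ as an endpoint through the $-\phi_1(2-\log_2\phi_1-\lceil-\log_2\phi_1\rceil)$ term. The new constraint $\ell'_1\le \lceil-\log_2\phi_2\rceil+1$ therefore produces additional savings only when $\lceil-\log_2\phi_1\rceil-1>\lceil-\log_2\phi_2\rceil+1$, and in that regime the marginal gain is exactly $\phi_1\cdot(\lceil-\log_2\phi_1\rceil-\lceil-\log_2\phi_2\rceil-2)$; packaging this as $\phi_1\max(0,\lceil-\log_2\phi_1\rceil-\lceil-\log_2\phi_2\rceil-2)$ uniformly handles the complementary range, and the symmetric term for $s_m$ follows in the same way. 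The shift by $-2$, rather than the $-1$ that the raw inequality $\ell'_1\le \ell'_2$ alone might suggest, is precisely what is required to avoid double-counting the bit already credited to $s_1$ by the adjacent-min contribution in (\ref{primobound}); making this non-double-counting rigorous is the technical heart of the argument.
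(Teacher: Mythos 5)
Your proposal is, in substance, the paper's own argument: it reuses the code of Theorem \ref{theorem_2} and extracts the two extra terms from the fact that the endpoint leaf cannot end up deeper than its unique neighbour, with the shift by $2$ (rather than $1$) absorbing the level already credited through $\min(\phi_1,\phi_2)$; the paper phrases this as the removal of the would-be unary nodes between the lowest common ancestor of $s_1,s_2$ and $s_1$, while you phrase it as $\ell'_1\le\ell'_2\le\lceil-\log_2\phi_2\rceil+1$ in the final tree, which is the same fact. Two small patches are needed to make your sketch match the stated corollary: the step ``the left subtree of the LCA contains only $s_1$, hence is the leaf $s_1$'' requires that the pruned tree is \emph{full} (true, since the tree of Lemma \ref{lemma:linear_time_alg} is full and the pruning preserves fullness — this is exactly the property the paper's proof leans on), and since Theorem \ref{theorem_2} is stated for non-dyadic $\phi$ you must dispose of dyadic distributions as the paper does, via the $\epsilon$-perturbation argument of Corollary \ref{cor:ref}. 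The double-counting bookkeeping you flag as the technical heart can be closed exactly as you indicate (the leaf $s_1$ either is or is not the leaf charged by the pair $(\phi_1,\phi_2)$, and in both cases its total drop below $\lceil-\log_2\phi_1\rceil$ is at least that charge plus $\max(0,\lceil-\log_2\phi_1\rceil-\lceil-\log_2\phi_2\rceil-2)$), which is no less rigorous than the paper's own treatment.
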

\begin{proof}
 {Assume first that  $\phi$ is non dyadic.}
   We {consider} the same code $C$ (equivalently, tree)  constructed in Theorem \ref{theorem_2},
   and we take advantage of the knowledge of the lengths of the codewords of the first two symbols and of the last two symbols. In fact, such knowledge in these two particular cases allows us to state more precisely how many levels we bump up the first and last symbol during the {construction of the code $C$}. 
    
    Let  $\ell_1=\lceil -\log_2 \phi_1 \rceil$ and $\ell_2 = \lceil -\log_2 \phi_2 \rceil +1$ be the lengths of the codewords associated to symbols $s_1$ and $s_2$, 
    according to (\ref{defell}). 
    When the level of the codeword associated to $s_1$ exceeds the level of the codeword of $s_2$ by \textit{more} than one
    unit, {that is when $\ell_1-\ell_2\ge 2$}, 
    it holds that 
    for \textit{each} of these additional levels, there {would have been}  an internal node having only one child {in the tree representation of the code}. {However, these nodes are removed since the intermediate code $C'$ constructed through Lemma \ref{lemma:linear_time_alg} is full in its tree representation.} The elimination of such nodes bumps up the symbol $s_1$, since the nodes {would have been} on the path from the lowest common ancestor of $s_1$ and $s_2$ to $s_1$. Thus, when {$\ell_1-\ell_2-1>0$} we can bump up $\phi_1$ of at least {$\ell_1-\ell_2-1$ additional} levels.
    {This gives the $-\phi_1\max(0,\lceil-\log_2 \phi_1 \rceil-\lceil-\log_2 \phi_2\rceil-2)$ term of the bound.}
   
    A similar reasoning can be made with $s_m$ and $s_{m-1}$,
    where  $\ell_m =\lceil -\log_2 \phi_m\rceil$ and $\ell_{m-1} = \lceil -\log_2 \phi_{m-1} \rceil +1$.
    Note that such a reasoning cannot be applied to the other pairs of probabilities, since the 
    knowledge of the codeword lengths alone
    is not sufficient to understand which of the symbols will be bumped up. We have to look at the whole codewords set
    to have enough information.

    {The bound (\ref{dacorollary_2}) can be extended to dyadic probability distributions proceeding
    as in Corollary \ref{cor:ref}.}
\end{proof}


\section{On Binary search trees}\label{bst}

In this section, we exploit the results on alphabetic codes
we have derived in Section \ref{alpha} to provide new bounds
on the average cost of optimal binary search trees. We recall that
Knuth's algorithm \cite{knuthp} to construct an optimum binary
search tree has quadratic time and space complexity; here we provide
a linear-time algorithm and give upper bounds on the cost of
the trees our algorithm produces. It is clear that our 
upper bounds apply to optimal trees as well;  we will also 
see that our bounds improve on the best previous results.
Since  our algorithms to construct almost-optimal search trees
take in input almost-optimal alphabetic trees, in order
to get linear time complexity it is essential  that we can
construct almost-optimal alphabetic trees in linear time as well.
This is one of the main reasons why we have stressed 
this feature of the algorithms presented in Section \ref{alpha}.

\medskip
We start this section  by amending the analysis of
the algorithm  presented by De Prisco and De Santis \cite{de1993binary}
(actually, our presentation  is slightly more
general than that of \cite{de1993binary}).

\begin{theorem}
    \label{theorem_6}
    Let $\sigma=(\sigma_1,\dots,\sigma_{2\,n+1})=(p_0,q_1,p_1,\dots,q_n,p_n)$ be a probability distribution that  contains
    the probabilities $q_1,\dots,q_n$ and $p_0,\dots,p_n$ 
    for successful and unsuccessful searches, respectively, 
    on an ordered sequence $x_1 \prec \dots \prec x_n$.
    Let $C$ be \emph{any} alphabetic code 
    for $\sigma$, of average length  $\mathbb{E}[C]$.
    Then, one can construct in linear time
    a binary search tree $T$ for $\sigma$  of cost
    \begin{equation}\label{general}
            \mathbb{E}[T] \leq \mathbb{E}[C] -\sum_{i=1}^n q_i -\sum_{i=0}^{n-1} \min(p_i,p_{i+1}). 
    \end{equation}
\end{theorem}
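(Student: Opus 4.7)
I would construct the binary search tree $T$ from the tree $T_C$ representing the alphabetic code $C$ via a local, bottom-up restructuring. The guiding idea is twofold: (i) each leaf $q_i$ of $T_C$ becomes an internal node of $T$ at the level of its former parent, yielding a saving $q_i$ per index; and (ii) for each consecutive pair $(p_i,p_{i+1})$ of $p$-leaves in $\sigma$, the restructuring additionally promotes the smaller of the two by one level, yielding a saving $\min(p_i,p_{i+1})$ per pair. Summing these two kinds of savings produces the bound~(\ref{general}).

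To make this precise, assume without loss of generality that $T_C$ is full (otherwise one can fill in phantom zero-probability leaves and remove them at the end). For each leaf $q_i$, let $a_i$ be its parent in $T_C$ and let $s_i$ be its sibling subtree. Because $T_C$ is alphabetic, $s_i$ contains consecutive neighbours of $q_i$ in $\sigma$: if $q_i$ is the right child of $a_i$, then the rightmost leaf of $s_i$ is $p_{i-1}$; if $q_i$ is the left child, then the leftmost leaf of $s_i$ is $p_i$. The transformation relabels $a_i$ as the BST internal node $q_i$ and keeps $s_i$ as one of its two subtrees; the remaining subtree is obtained by ``lifting'' the neighbour on the opposite side of $a_i$ into that slot. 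This relabelling places $q_i$ at level $\ell_C(q_i)-1$ in $T$, contributing total savings $\sum_i q_i$ with respect to $\mathbb{E}[C]$. For each $p$-leaf, the level in $T$ is at most $\ell_C(p_i)$, and is strictly smaller by one whenever a lift passes through it. For each consecutive pair $(p_i,p_{i+1})$, the interposed $q_{i+1}$ absorbs into its parent $a_{i+1}$, whose other side, in the alphabetic order of leaves, contains exactly one of $\{p_i,p_{i+1}\}$; the \emph{other} member of the pair is the one that gets lifted by one level during the absorption. Orienting the construction so that the lifted leaf is always the one with smaller probability gives a saving of $\min(p_i,p_{i+1})$ for every $i=0,\ldots,n-1$. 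Combining the two kinds of savings yields~(\ref{general}), and since each node of $T_C$ is touched only a constant number of times the total running time is $O(n)$.

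\textbf{Main obstacle.} The delicate step is ensuring that the local, greedy choices of which side of each pair $(p_i,p_{i+1})$ to promote are mutually consistent across the whole tree: a naive selection might request contradictory orientations for two overlapping pairs (for instance, when $p_i$ is involved both in $(p_{i-1},p_i)$ and in $(p_i,p_{i+1})$). The way around this is to exploit the fact that each $q_i$'s absorption acts only on its own sibling subtree $s_i$, so that the lift associated with any given pair $(p_i,p_{i+1})$ is controlled independently of the lifts associated with other pairs. Verifying this independence carefully, and dealing with the boundary leaves $p_0$ and $p_n$ which are adjacent to only one pair, is the most technical part of the argument; once that is settled, the cost accounting reducing to~(\ref{general}) is routine.
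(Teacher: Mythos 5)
There is a genuine gap, in two places. First, your cost accounting for the $q_i$'s is off by one under the paper's conventions and the claimed saving $\sum_{i=1}^n q_i$ does not materialize. The BST cost charges $q_i$ its level counted in \emph{nodes}, with the root at level $1$, while $\mathbb{E}[C]$ charges $q_i$ its codeword length, i.e.\ its depth in \emph{edges}. If the leaf $q_i$ has codeword length $d_i$, its former parent $a_i$ sits at node-level exactly $d_i$, so relabelling $a_i$ as the internal node $q_i$ contributes $d_i\,q_i$ to $\mathbb{E}[T]$ --- exactly what $q_i$ contributed to $\mathbb{E}[C]$, hence no saving at all. As stated, your construction could at best give $\mathbb{E}[T]\leq \mathbb{E}[C]-\sum_{i=0}^{n-1}\min(p_i,p_{i+1})$. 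To gain the $\sum_{i=1}^n q_i$ term one must place $q_i$ strictly higher than its former parent: the intended construction raises $q_i$ to the lowest common ancestor of $p_{i-1}$ and $p_i$, which is at least two edge-levels above the leaf $q_i$ (the child subtree of that ancestor containing $q_i$ also contains one of $p_{i-1},p_i$), so its node-level is at most $d_i-1$. One then checks, using the alphabetic order of the leaves, that distinct $q_i$'s land on distinct nodes and that the in-order traversal of the resulting tree is $p_0,q_1,p_1,\dots,q_n,p_n$, so it is a legal binary search tree.

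Second, your structural operation is not well defined and the independence argument you invoke fails for it. ``Lifting the neighbour on the opposite side of $a_i$ into the vacated slot'' reaches \emph{outside} the sibling subtree $s_i$: that neighbour lives in a distant part of the tree, so the absorptions are not independent after all. Concretely, in the tree of Figure \ref{fig:algorithm_example}(a), $p_1$ would be requested simultaneously as the missing child of $q_1$'s parent and of $q_2$'s parent, and moving $p_1$ under $q_2$'s parent would even \emph{increase} its depth, which your accounting ignores. Moreover, you cannot in general ``orient'' the construction so that the smaller of $p_i,p_{i+1}$ is lifted: the code $C$ is given, and its tree structure dictates which neighbour lies in $q_{i+1}$'s sibling subtree. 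Fortunately no such choice is needed: if the only structural change at $q_i$ is to delete the leaf $q_i$ (after copying it to the lowest common ancestor), its sibling subtree rises by one level, and the nearest leaf of that subtree is one of $p_{i-1},p_i$; whichever it is, the saving is at least $\min(p_{i-1},p_i)$, and since every $p$-leaf lift saves exactly its probability (the $p$-charges coincide with codeword lengths), summing over $i$ gives the $\sum_{i=0}^{n-1}\min(p_i,p_{i+1})$ term. With these repairs --- LCA placement plus leaf deletion and sibling lift, the distinctness and in-order verifications, and the linear-time implementation of De Prisco and De Santis --- the bound (\ref{general}) follows.
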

\begin{proof}
  Let us start from the binary tree $B$ corresponding to 
  the alphabetic code $C$. 
    The tree $B$ has $2n+1$ leaves, that we identify with the ordered
    sequence of probabilities
    $p_0,q_1,p_1,\dots,q_n,p_n$. We transform the 
    leaves corresponding to the $q_i$'s into \textit{internal} nodes. 
    As in \cite{de1993binary}, we raise each $q_i$ to the lowest common ancestor of $p_{i-1}$ and $p_i$, which is at least two levels above $q_i$. Hence, $\ell(q_i)$, i.e. the number of nodes from the root to $q_i$, decreases by at least of one unit, for each $i$. In addition, when we bring up 
    $q_i$ the whole sub-tree whose root is the sibling of $q_i$ goes up by at least one level. Hence, at least one between  $p_{i-1}$ and $p_i$ goes up one level. \footnote{The analysis of \cite{de1993binary} is incorrect since it claims that all but only one probability goes up one level. One can
    see that this is not the case, as also observed by the authors of \cite{dagan} and by \cite{ku}.} Figure  \ref{fig:algorithm_example} provides an example of the described process. 
    
    \begin{figure}[H]
        \begin{tabular}{@{}c@{}}
            \begin{tikzpicture}[nodes={circle, draw}, semithick, scale=0.4, level distance=2cm,
            level 1/.style={sibling distance=10cm},
            level 2/.style={sibling distance=4cm},
            level 3/.style={sibling distance=1cm}
            level 4/.style={sibling distance=1cm}]
            \node {}
                child {node {} 
                  child {node {}
                    child {node[rectangle, draw] {$p_0$}}
                    child {node {$q_1$}}
                  }
                  child {node[rectangle, draw] {$p_1$}}
                }
                child {node {}
                    child {node {}
                        child {node {$q_2$}}
                        child {node {}
                            child {node[rectangle, draw] {$p_2$}}
                            child {node {$q_3$}}
                        }
                    }
                    child {node[rectangle, draw] {$p_3$}   
                    }
                };
            \end{tikzpicture}
             \\[\abovecaptionskip]
            \small (a) The initial alphabetic tree
        \vspace{0.2cm}
        \end{tabular}
        
         \begin{tabular}{@{}c@{}}
            \hspace{0.3cm}
            \begin{tikzpicture}[nodes={circle, draw}, semithick, scale=0.40, level distance=2cm,
                level 1/.style={sibling distance=10cm},
                level 2/.style={sibling distance=4cm},
                level 3/.style={sibling distance=1cm}
                level 4/.style={sibling distance=1cm}]
                \node {}
                child {node {$q_1$} 
                  child {node[rectangle, draw] {$p_0$}}
                  child {node[rectangle, draw] {$p_1$}}
                }
                child {node {}
                    child {node {}
                        child {node {$q_2$}}
                        child {node {}
                            child {node[rectangle, draw] {$p_2$}}
                            child {node {$q_3$}}
                        }
                    }
                    child {node[rectangle, draw] {$p_3$}   
                    }
                };
            \end{tikzpicture} \\[\abovecaptionskip]
            \small (b) The tree after transforming $q_1$ into an internal node
            \vspace{0.2cm}
        \end{tabular}
        \begin{tabular}{@{}c@{}}
            \hspace{0.55cm}
            \begin{tikzpicture}[nodes={circle, draw}, semithick, scale=0.40, level distance=2cm,
            level 1/.style={sibling distance=10cm},
            level 2/.style={sibling distance=4cm},
            level 3/.style={sibling distance=1cm}
            level 4/.style={sibling distance=1cm}]
            \node {$q_2$}
                child {node {$q_1$} 
                  child {node[rectangle, draw] {$p_0$}}
                  child {node[rectangle, draw] {$p_1$}}
                }
                child {node {$q_3$}
                    child {node[rectangle, draw] {$p_2$}}
                    child {node[rectangle, draw] {$p_3$}   
                    }
                };
            \end{tikzpicture} \\[\abovecaptionskip]
            \small (c) The final binary search tree
        \end{tabular}
        \caption{Figure (a) shows the initial alphabetic tree; Figure (b) shows the tree after performing the first step; Figure (c) shows the tree after the application of the algorithm.}
        \label{fig:algorithm_example}
    \end{figure}

    Therefore, in total we bump up leaves whose total probability is at least
    \begin{equation}\label{bump_up_probabilities}
        \sum_{i=1}^n q_i + \sum_{i=0}^{n-1} \min(p_i,p_{i+1}).
    \end{equation}

    Let us show that the   tree $T$, constructed according 
    to the above procedure, is a correct binary search tree for the distribution  $\sigma=(\sigma_1,\dots,\sigma_{2\,n+1})=(p_0,q_1,p_1,\dots,q_n,p_n)$. 
    
    To that purpose, let us observe that the search property holds
     in a binary tree if and only if its in-order visit 
     gives the nodes in the order $p_0, q_1, p_1,\ldots,q_n,p_n$. To show that such a property holds in our tree $T$, we first point out
    that each probability $q_i$ necessarily occupies a distinct internal node in $T$. By 
    the sake of contradiction let us suppose that it is not the case,
    that is, there exist $q_i$ and $q_j$, with $i<j$, that have been assigned to the same node $x$. Then, $x$ is the lowest common ancestor  both of $p_{i-1}$ and $p_{i}$, and of $p_{j-1}$ and $p_j$. This 
    implies that $p_{i-1}$ and $p_{j-1}$ are in the left sub-tree of $x$, while $p_i$ and $p_j$ are in the right sub-tree of $x$. 
    This flagrantly contradicts the fact that  in the tree $B$, from which we started,
    the leaves appear in the order $p_0,q_1,p_1,\dots,q_n,p_n$ (i.e., that $B$   
    represents an alphabetic code).
    
    Thus, each $q_i$ is assigned to a distinct lowest common ancestor, and the order of the $p_i$'s is not affected by the transformation of the $q_i$'s in internal nodes. Therefore, an in-order visit of the tree gives the nodes in the order $p_0, q_1, p_1,\ldots,q_n,p_n$. We conclude that $T$ is a correct binary search tree for $\sigma$.

    The manipulation of  the $q_i$'s to transform them
     from leaves of $B$ to internal nodes of $T$ can be implemented in linear time, as shown in \cite{de1993binary}.
    
   Finally, we evaluate the cost of the obtained 
    binary search tree $T$. From (\ref{bump_up_probabilities}) the cost of  $T$ satisfies the inequality
    \begin{equation*}
         \mathbb{E}[T] \leq \mathbb{E}[C] -\sum_{i=1}^n q_i -\sum_{i=0}^{n-1} \min(p_i,p_{i+1}).
    \end{equation*}
    This concludes the proof of the theorem.
\end{proof}

\medskip
By plugging into (\ref{general}) either formula (\ref{dyadicnostra})
of Theorem \ref{theorem_using_yeung}, or formula (\ref{primobound})
of Theorem \ref{theorem_2}, or formula (\ref{dacorollary_2}) of
Corollary \ref{corollary_2} (according to the hypothesis that holds)
we get a series of upper bounds on the
cost of optimal binary search trees that considerably improve on the bound
(\ref{mel}) of \cite{mehlhorn1975nearly}, as we asserted in Section \ref{our}.
For instance, by using formula (\ref{primobound_v2})
of Theorem \ref{theorem_2} in formula (\ref{general}), we get 
the  upper bound 
\begin{align}
    &H(\sigma) + 2 -\sigma_1 -\sigma_{2n+1} -\sum_{i=1}^{2n} \min(\sigma_i, \sigma_{i+1})\nonumber\\ 
    &\quad-\sum_{i=1}^n q_i - \sum_{i=0}^{n-1} \min(p_i,p_{i+1}),\label{bis}
\end{align}
already  
mentioned in Section \ref{our}.

We remark that our upper bounds on the cost of optimal  binary search trees are,
in fact, upper bounds on the cost of binary search trees constructed by
the \textit{linear} time algorithm outlined in Theorem \ref{theorem_6}.

\medskip
It is worth noting that the binary search trees constructed
according to Theorem \ref{theorem_6} are not too far from being
optimal, and this is due to the lower bounds on the cost of \textit{any} binary search tree given in \cite{de1996binary}. 
For example,
the authors of \cite{de1996binary} proved that 
the cost of {any} binary search tree is lower bounded by
\begin{equation}\label{lower}
H(\sigma)+ \sum_{i=1}^n q_i + H(\sigma)\log_2(H(\sigma)) -
(H(\sigma)+1)\log_2(H(\sigma)+1).
\end{equation}
One can  see that (\ref{bis}) and (\ref{lower}) are not too far apart.
{In fact, the difference between the upper bound (\ref{bis}) and the lower bound
(\ref{lower}) is smaller than
\begin{align}
2& -\sigma_1 -\sigma_{2n+1} -\sum_{i=1}^{2n} \min(\sigma_i, \sigma_{i+1}) -2\sum_{i=1}^n q_i\nonumber\\
\quad&-\sum_{i=0}^{n-1} \min(p_i,p_{i+1})\nonumber \\
\quad& -H(\sigma)\log_2(H(\sigma)) +
(H(\sigma)+1)\log_2(H(\sigma)+1)\nonumber \\
&< 2 +\log_2(H(\sigma)+1) +\log_2e -\sigma_1 -\sigma_{2n+1}\nonumber\\
&\quad-\sum_{i=1}^{2n} \min(\sigma_i, \sigma_{i+1}) -2\sum_{i=1}^n q_i - \sum_{i=0}^{n-1} \min(p_i,p_{i+1}).\label{diffupplow}
\end{align}

Additionally, other results in \cite{de1996binary} strengthen the lower bound (\ref{lower}) for particular
range of values of the entropy $H$ and thus can be used the
reduce the gap between the upper bound and lower bound
given in (\ref{diffupplow}).}
Therefore, in situations where the 
optimal but onerous $O(n^2)$ time algorithm  by Knuth  \cite{knuthp} cannot be used, 
our analysis shows that the  $O(n)$ time  algorithm of \cite{de1993binary}
is a valid alternative. 
An additional merit of Theorem \ref{theorem_6} is that any possibly improved bounds
on the average length of alphabetic codes will immediately translate
into improved bounds on the cost of binary search trees.

\section*{Acknowledgments}
The authors would like to thank the referees 
for their comments, corrections,  and insightful suggestions. 
Our paper has been greatly improved by the
generous help of the referees.



\section*{Appendix A: Proof of Lemma \ref{lemma:l_bit}}
\label{appendix:a}

{

\begin{proof}
To prove claim 1) of the lemma, 
let $s$ be the  \textit{smallest} integer for which the binary expansions of the pair $\sm(L,i)$ and $\sm(L,j)$ differ on the $s^{th}$ bit. We 
first prove that $s$ and 
$t=t_{ij}$ (cfr, eq. (\ref{tij})) have the same value. 

First of all, by definition of $s$, we know that the binary expansions of $\sm(L, i)$ and $\sm(L,j)$ are equal on the first $s-1$ bits. Moreover, since $\sm(L,i)$ and $\sm(L,j)$ are equal on the first $s-1$ bits and $\sm(L,i)<\dots<\sm(L,j)$, it follows that for each $k=i,\dots,j-1$, the binary expansions of $\sm(L,k)$ and $\sm(L,k+1)$ are equal on the first $s-1$ bits, too. Therefore, it holds that $s\leq t$. 

Since $\sm(L,i)$ and $\sm(L,j)$ differ on the $s^{th}$ bit and $\sm(L,i)<\sm(L,j)$, we know that the $s^{th}$ bit of $\sm(L,i)$ is 0, while the $s^{th}$ bit of $\sm(L,j)$ is 1. As a consequence, since $\sm(L,i)<\dots<\sm(L,j)$ and for each $k=i,\dots,j-1$, 
\textit{both} the binary expansions of $\sm(L,k)$ \textit{and} of
$\sm(L,k+1)$ are equal on the first $s-1$ bits, there must exist an index $k\in\{i,\dots,j-1\}$ such that the $s^{th}$ bit of $\sm(L,k)$ is 0, and the $s^{th}$ bit of $\sm(L,k+1)$ is 1. 
Therefore, since $t=t_{ij}=\min_{i\leq k<j} t_k$ and there exists $k$ such that $t_k=s$, it holds that $s\geq t$. Thus, we finally get that $s=t$. 

One can also see that $s=\lceil-\log_2 (\sm(L,i) \oplus \sm(L,j))\rceil$, where
$\sm(L,i) \oplus \sm(L,j)$ is the numerical value
        whose binary expansion is equal to the XOR operation between the binary expansions of $\sm(L,i)$ and $\sm(L,j)$.
        Indeed, since
the binary expansions of $\sm(L, i)$ and $\sm(L,j)$ 
are equal on the first $s-1$ bits, it holds that the 
binary expansion of the value $\sm(L,i) \oplus \sm(L,j)$ contains the first bit 
equal to 1 exactly in the $s^{th}$ position. 
Therefore, one gets that $s$ is the smallest integer for which it holds that
    \begin{equation}\label{eq:ell_def}
        2^{-s}\leq \sm(L,i)\oplus\sm(L,j).
    \end{equation}
    From (\ref{eq:ell_def}), we get that $s=\lceil-\log_2 (\sm(L,i) \oplus \sm(L,j))\rceil$.   

{
To prove claim 2) of the lemma, 
let $z\in\{i,\dots,j-1\}$ denote the index for which it holds that $t_z=t_{ij}$. Then, we have that the binary expansion of $\sm(L,z)$ differs from the binary expansion of $\sm(L,z+1)$ on the $t_{ij}^{th}$ bit. In particular, from the definition of $t_z$, the binary expansions of  $\sm(L,z)$ and $\sm(L,z+1)$ are equal on the first $t_{ij}-1$ bits, and, since $\sm(L,z)<\sm(L,z+1)$, we have that $\sm(L,z)$ has a binary expansion with $0$ in the $t_{ij}^{th}$ position, while  $\sm(L,z+1)$ has a binary expansion with $1$ in the $t_{ij}^{th}$ position. Finally, since $\sm(L,i)<\dots<\sm(L,j)$ and since the binary expansions of the values in the subset $\{\sm(L,i),\dots,\sm(L,j)\}$ are equal on the first $t_{ij}-1$ bits 
(this comes from the definition of $t_{ij}$), we get that for the index $z$ it holds that
\begin{equation*}
    \sm(L,z)<\trunc(t_{ij},\sm(L,i))+2^{-t_{ij}}
\end{equation*}
and 
\begin{equation*}
    \sm(L,z+1) \geq \trunc(t_{ij},\sm(L,i))+2^{-t_{ij}}.
\end{equation*}
This concludes the proof.
}

\end{proof}

}

\section*{Appendix B}
\begin{lemma}
    \label{lemma_3}
    For any non-increasing dyadic distribution $\phi=(\phi_1,\dots,\phi_m)$, let $\nu\,'= (0,\phi_1,0,\dots,0,\phi_m,0)$ be its fully extended distribution, then there does not exist any alphabetic code with lengths $L=\langle \ell_1,\dots,\ell_{2\,m+1}\rangle$ defined as follows
    \begin{equation*}
    \ell_i = \begin{cases}
            x_i & \text{ if } \nu\,'_i = 0,\\
            \lceil-\log_2 \nu\,'_i \rceil +1 & \text{ if } \nu\,'_i > 0,\\
        \end{cases}
    \end{equation*}
    where $x_i\in \mathbb{N}_+$ are arbitrarily chosen values.
\end{lemma}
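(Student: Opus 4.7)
My plan is to invoke Nakatsu's characterization (Theorem~\ref{condition_1}): non-existence of any alphabetic code with the prescribed lengths $L$ is equivalent to $\sm(L,2m+1)\ge 1$, so the task reduces to proving this inequality for every choice of the positive integers $x_i$. Writing $\phi_k=2^{-b_k}$ with $b_1\le b_2\le\cdots\le b_m$, the even-indexed (fixed) entries satisfy $\ell_{2k}=b_k+1$, while the odd-indexed entries $\ell_{2k-1}=x_{2k-1}$ are arbitrary.

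The key observation is a uniform lattice statement. Because $\alpha_{2k}=\min(\ell_{2k-1},\ell_{2k})\le b_k+1$ and $\alpha_{2k+1}=\min(\ell_{2k},\ell_{2k+1})\le b_k+1$ regardless of the $x_i$'s, and because $\sm(L,j)$ is visibly a multiple of $2^{-\alpha_j}$ by its defining recursion (both $\trunc(\alpha_j,\cdot)$ and $2^{-\alpha_j}$ sit on the $2^{-\alpha_j}$-lattice), it follows that both $\sm(L,2k)$ and $\sm(L,2k+1)$ are multiples of $\phi_k/2=2^{-(b_k+1)}$. The non-increasing and dyadic hypotheses on $\phi$ enter in exactly one spot: every $\phi_i$ with $i\le k$ equals $2^{-b_i}$ with $b_i\le b_k$, so it is a multiple of $\phi_k/2$, and hence so is the partial sum $S_k=\phi_1+\cdots+\phi_k$.

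With the strict monotonicity $\sm(L,j)>\sm(L,j-1)$ (immediate from $\trunc(a,x)+2^{-a}>x$) I would then prove by induction on $k$ that $\sm(L,2k+1)\ge S_k$. The base $k=0$ is $\sm(L,1)=0=S_0$. For the inductive step, the hypothesis $\sm(L,2k-1)\ge S_{k-1}$ and monotonicity give $\sm(L,2k)>S_{k-1}$; since both lie on the $\phi_k/2$-lattice, this forces $\sm(L,2k)\ge S_{k-1}+\phi_k/2$, and applying the same lattice argument once more yields $\sm(L,2k+1)\ge S_{k-1}+\phi_k=S_k$. Specializing to $k=m$ gives $\sm(L,2m+1)\ge S_m=1$, which by Theorem~\ref{condition_1} rules out the existence of the code.

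The point requiring most care is the divisibility bookkeeping: one must apply the direction ``multiples of a coarser dyadic grid are automatically multiples of a finer one'' in the correct direction, and the non-increasing assumption on $\phi$ is exactly what keeps $S_{k-1}$ on the $\phi_k/2$-lattice so that the one-step lattice jump can be iterated without loss. Once this alignment is established, the inductive step reduces to two applications of the same elementary fact about dyadic lattices.
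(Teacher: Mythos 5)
Your proof is correct, and it takes a genuinely different route from the paper's. The paper argues in two cases: when every length $x_i$ assigned to a zero slot is at least the length of the adjacent positive slot, it computes $\sm(L,2m+1)=\sum_i\phi_i=1$ exactly (the $\alpha_i$ are then non-decreasing, so no truncation ever discards a bit); for all other choices it shows, via a four-sub-case check on how $\alpha_i$ and $\alpha_{i+1}$ can change, that increasing any $x_i$ by one never increases $\sm(L,2m+1)$, and then derives a contradiction by raising a hypothetical configuration with $\sm(L,2m+1)<1$ into the first case. You instead prove the uniform inequality $\sm(L,2k+1)\ge\phi_1+\cdots+\phi_k$ by induction on $k$: writing $\phi_k=2^{-b_k}$, you use that $\sm(L,j)$ is by its defining formula an integer multiple of $2^{-\alpha_j}$, that $\alpha_{2k},\alpha_{2k+1}\le b_k+1$ no matter how the $x_i$ are chosen, and that the non-increasing dyadic hypothesis keeps the partial sums $\phi_1+\cdots+\phi_{k-1}$ on the $\phi_k/2$-lattice, so the strict monotonicity $\trunc(a,x)+2^{-a}>x$ can be upgraded to a jump of at least $\phi_k/2$ at each of the two relevant steps; taking $k=m$ gives $\sm(L,2m+1)\ge 1$ and Theorem \ref{condition_1} finishes. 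Your lattice bookkeeping is applied in the correct direction (a multiple of the coarser value $2^{-\alpha_j}$ is a multiple of the finer value $2^{-(b_k+1)}$ precisely because $\alpha_j\le b_k+1$), so the induction closes. What each approach buys: yours is shorter, avoids both the case split and the iterated length-increasing/contradiction step, and yields the quantitative intermediate bounds at every $k$; the paper's route isolates a reusable monotonicity observation (lengthening any slot cannot increase the final $\sm$) and identifies the boundary configuration where $\sm(L,2m+1)$ equals exactly $1$.
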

\begin{proof}
    Without loss of generality let us assume that $\phi_m > 0$. Otherwise, we can simply show that the lemma holds for $(\phi_1,\dots,\phi_{m-1})$ and hence for $\phi$.
    
    Let $L = \langle \ell_1, \dots, \ell_{2\,m+1} \rangle$ be the
    list of lengths as defined in the statement of the Lemma.
    We will consider two cases. The first 
    corresponds to the case in which,
    for each $i\in\{1,3,\dots, 2\,m-1\}$, we have $ x_i= \ell_i \geq \ell_{i+1}$ and $x_{2\,m+1}= \ell_{2\,m+1} \geq \ell_{2\,m}$;
     the second case is for all other possible choices of the $x_i$.
    
    \textbf{Case 1}:
    Recall that, for $i\in \{2,\dots,m\}$, we have that $\alpha_i = \min(\ell_{i-1},\ell_i)$. Since $\phi$ is non-increasing, for each $i\in\{2,4,\dots,2\,m-2\}$, we have that $\ell_i \leq \ell_{i+2}$. Thereby, for each $i\in\{2,\dots,2\,m\}$, it  holds that $\alpha_i \leq \alpha_{i+1}$. In particular, for each $i\in \{2,4,\dots,2\,m\}$, we have that 
    {
    $$\alpha_i = \alpha_{i+1}=\lceil -\log_2 \phi_{\frac{i}{2}} \rceil +1 = -\log_2 \phi_{\frac{i}{2}} + 1.$$}
    
    Putting it all together we obtain that
    \begin{equation*}
        \begin{aligned}
            \sm(L, i) &= \trunc(\alpha_i, \sm(L, i-1))+2^{-\alpha_i}\\ 
            & = \sm(L, i-1)+2^{-\alpha_i}\\
            & = \sm(L, i-1)+2^{-\left(-\log_2 \phi_{\lfloor \frac{i}{2}\rfloor} + 1\right)}\\
            & = \sm(L, i-1) + \frac{\phi_{\lfloor \frac{i}{2}\rfloor}}{2},
        \end{aligned}
    \end{equation*}
    where the {second} equality holds {because  the $\alpha_i$ are non-decreasing, 
    and therefore the binary representation of $\sm(L,i-1)$ contains bits equal to 1 only in
    the positions less than or equal to $\alpha_i$}.
    
    Since we add up each $\phi_i$ exactly once without any truncation, $\sm(L, 2\,m+1) = \sum_{i=1}^m \phi_i = 1$ and by Theorem \ref{condition_1} there does not exist any alphabetic code for the lengths $L$.
    
    \textbf{Case 2}: In this case, there is at least one $\ell_i = x_i$ that does not satisfy the condition of Case 1. Let us suppose that $\sm(L, 2\,m+1) < 1$ and show that this leads to a contradiction.
    
    Consider the sub-case in which we increase by 1 $x_i$, with $1<i<2m+1$.
    Let $L'= \langle \ell_1,\dots,\ell_{i-1}, \ell_i +1, \ell_{i+1},\dots, \ell_{2\,m+1} \rangle$ be the 
    list of lengths after increasing such $x_i$ by 1. Let us show that $\sm(L', 2\,m+1)$ can only decrease 
    with respect to $\sm(L, 2\,m+1)$. Since $x_i$ can only affect $\alpha_i$ and $\alpha_{i+1}$, we have four cases:
    \begin{itemize}
        \item $\alpha_i$ increases by 1 and $\alpha_{i+1}$ does not change:  
        \begin{equation*}
            \begin{aligned}
                \sm(L', i) &= \trunc(\alpha_i + 1, \sm(L, i-1))+ 2^{-(\alpha_i + 1)}\\
                &\leq \trunc(\alpha_i, \sm(L, i-1)) + 2^{-(\alpha_i + 1)}\\
                &\quad+ 2^{-(\alpha_i + 1)}\\
                &= \trunc(\alpha_i, \sm(L, i-1)) + 2^{-\alpha_i}\\
                &= \sm(L, i).
            \end{aligned}
        \end{equation*}
        \item  $\alpha_i$ does not change and $\alpha_{i+1}$ increases by 1:  
        \begin{equation*}
            \begin{aligned}
                \sm(L', i+1) &= \trunc(\alpha_{i+1} + 1, \sm(L, i)) \\
                &\quad+ 2^{-(\alpha_{i+1} + 1)}\\
                &\leq \trunc(\alpha_{i+1}, \sm(L, i)) \\
                &\quad+ 2^{-(\alpha_{i+1} + 1)} + 2^{-(\alpha_{i+1} + 1)}\\
                &= \trunc(\alpha_{i+1}, \sm(L, i)) + 2^{-\alpha_{i+1}}\\
                &= \sm(L, i+1).
            \end{aligned}
        \end{equation*}
        \item Both $\alpha_i$ and $\alpha_{i+1}$ increase by 1: then $\alpha_i = \alpha_{i+1}$ and 
        \begin{align*}
                \sm(L', i+1) &= \trunc(\alpha_{i+1} + 1, \sm(L', i))\\
                &\quad+ 2^{-(\alpha_{i+1}+1)}\\
                &=\trunc(\alpha_{i+1} + 1,\\
                &\quad\quad\trunc(\alpha_i + 1, \sm(L,i-1))\\
                &\quad\quad+ 2^{-(\alpha_i+1)} )\\ 
                &\quad+ 2^{-(\alpha_{i+1}+1)} \\
                &= \trunc(\alpha_i + 1, \sm(L, i-1)) \\
                &\quad+ 2^{-(\alpha_i+1)} + 2^{-(\alpha_{i+1}+1)}\\
                &\quad{\mbox{(since $\alpha_i = \alpha_{i+1}$})}\\
                &=\trunc(\alpha_i + 1, \sm(L, i-1)) \\
                &\quad+ 2^{-(\alpha_i+1)} + 2^{-(\alpha_{i}+1)}\\
                &=\trunc(\alpha_i + 1, \sm(L, i-1)) + 2^{-\alpha_i}\\
                &\leq \trunc(\alpha_i, \sm(L, i-1))\\ 
                &\quad+ 2^{-\alpha_i} + 2^{-(\alpha_i+1)}\\
                &<\trunc(\alpha_i, \sm(L, i-1))\\
                &\quad+ 2^{-\alpha_i} + 2^{-\alpha_{i+1}} \\
                &= \trunc(\alpha_{i+1},\\&\quad\quad \trunc(\alpha_i,\sm(L,i-1))+2^{-\alpha_i})\\&\quad+ 2^{-\alpha_{i+1}}\\
                &= {\trunc(\alpha_{i+1}, \sm(L,i)) + 2^{-\alpha_{i+1}}}\\
                &= \sm(L, i+1).
        \end{align*}
        \item  Both $\alpha_i$ and $\alpha_{i+1}$ do not change: the final sum remains the same.
    \end{itemize}
Therefore, we have that $\sm(L', 2\,m+1)\leq \sm(L,2\,m+1)$.

Consider now the sub-cases in which we increase by 1 $x_1$ or $x_{2\,m+1}$. In such sub-cases, only $\alpha_2$ and $\alpha_{2\,m+1}$ can be affected. Let us consider the case when $x_1$ is increased and let $L' = \langle \ell_1+1,\dots,\ell_{2\,m+1} \rangle$ be the list of lengths after increasing $x_1$. We have two possibilities:
\begin{itemize}
    \item $\alpha_2$ increases by 1:
    \begin{equation*}
        \begin{aligned}
            \sm(L',2)&=\trunc(\alpha_{{2}} + 1, \sm(L, 1)) + 2^{-\alpha_{{2}} -1}\\
            &= \trunc(\alpha_{{2}}, \sm(L, 1)) + 2^{-\alpha_{{2}} - 1}\\
            &<\trunc(\alpha_{{2}}, \sm(L, 1)) + 2^{-\alpha_{{2}}}\\
            &=\sm(L, 2)
        \end{aligned}
    \end{equation*}
    \item  $\alpha_2$ does not change: then $\sm(L',2)=\sm(L,2)$.
\end{itemize}
A similar reasoning holds for $x_{2\,m+1}$ too. By denoting with $L' = \langle \ell_1,\dots,\ell_{2\,m+1}+1 \rangle$ the list of lengths after increasing $x_{2\,m+1}$, we have two possibilities:
\begin{itemize}
    \item $\alpha_{2\,m+1}$ increases by 1:
    \begin{equation*}
        \begin{aligned}
            \sm(L',2\,m+1) &= \trunc(\alpha_{2\,m+1}+1, \sm(L,2\,m))\\
            &\quad+2^{-(\alpha_{2\,m+1} + 1)}\\
            &\leq \trunc(\alpha_{2\,m+1}, \sm(L,2\,m))\\
            &\quad+2^{-(\alpha_{2\,m+1} + 1)}+2^{-(\alpha_{2\,m+1} + 1)}\\
            &= \trunc(\alpha_{2\,m+1}, \sm(L,2\,m))\\
            &\quad+ 2^{-\alpha_{2\,m+1}}\\
            &=\sm(L, 2\,m+1)
        \end{aligned}
    \end{equation*}

    \item  $\alpha_{2\,m+1}$ does not change: then $\sm(L', 2\,m+1) = \sm(L, 2\,m+1)$.
\end{itemize}
Therefore, in such cases too, we have that $\sm(L', 2\,m+1)\leq \sm(L, 2\,m+1)$.
\smallskip

However, the above conclusions
readily lead to a contradiction since, if we repeat the process iteratively for each $x_i$,
we obtain that when $x_i = \ell_{i+1}$ for each $i\in\{1,3,\dots,2\,m-1\}$, and $x_{2\,m+1} = \ell_{2\,m}$, it holds that $\sm(L', 2\,m+1) < 1$ and by Theorem \ref{condition_1} there exists an alphabetic codes with lengths $L'$. But, for such lengths $L'$, we are in Case 1 and we know that this is not possible. Thus, even for values of $x_i$ arbitrarily chosen, we have that $\sm(L, 2\,m+1)\geq 1$ and by Theorem \ref{condition_1} there does not exist any alphabetic code
for the list of length $L$.
\end{proof}

\end{document}